\providecommand{\U}[1]{\protect\rule{.1in}{.1in}}
\newtheorem{theorem}{Theorem}
\newtheorem{corollary}[theorem]{Corollary}
\newtheorem{lemma}[theorem]{Lemma}
\newtheorem{proposition}[theorem]{Proposition}
\newtheorem{remark}[theorem]{Remark}
\newenvironment{proof}[1][Proof]{\textbf{#1.} }{\ \rule{0.5em}{0.5em}}
\begin{document}

\title{\textbf{Locality and universality in gravitational anomaly cancellation}}
\date{}
\author{\textsc{Roberto Ferreiro P\'{e}rez}\\Departamento de Econom\'{\i}a Financiera y Actuarial y Estad\'{\i}stica\\Facultad de Ciencias Econ\'omicas y Empresariales, UCM\\Campus de Somosaguas, 28223-Pozuelo de Alarc\'on, Spain\\\emph{E-mail:} \texttt{roferreiro@ccee.ucm.es}}
\maketitle

\begin{abstract}
We obtain necesary and sufficient conditions for gravitational anomaly
cancellation. We show that perturbative gravitational anomalies can never be
cancelled. In a similar way, in dimensions $n\neq3\operatorname{mod}4$ it is
impossible to cancell global anomalies. However, in dimensions
$n=3\operatorname{mod}4$ global anomalies can be cancelled. We prove that the
unique way to cancel the anomaly is by using a Chern-Simons counterterm.
Furthermore, the relationship between the problems of locality and
universality is analyzed for gravitational anomalies.

\end{abstract}

\bigskip

\noindent\emph{Mathematics Subject Classification 2010:\/} Primary 81T50;
Secondary 53C80, 58D17, 58A20.

\medskip

\noindent\emph{Key words and phrases:\/}Gravitational anomalies, space of
Riemannian metrics, local cohomology, Chern-Simons action.

\medskip

\noindent\emph{Acknowledgments:\/} Supported by \textquotedblleft Proyecto de
investigaci\'{o}n Santander-UCM PR26/16-20305\textquotedblright.

\section{Introduction}

Although the study of gravitational anomalies is a classical subject\ in
Quantum Field Theory, there are still some open problems related to them (e.g.
see \cite{Monnier} and references therein). But the interest in gravitational
anomalies has increased recently due to the fact that the concept of
gravitational anomaly cancellation has been questioned in \cite{Witten2016}
and \cite{WittenFermionic}. To clarify this concept, a detailed study of
gravitational anomalies is needed. In this paper we obtain necessary and
sufficient conditions for gravitational anomaly cancellation compatible with
locality. We also study the relationship between locality and another problem
in anomaly cancellation, the universality problem.

Roughly speaking, the universality problem for gravitational anomalies
means\ that the conditions for anomaly cancellation should be independent of
the global details of the space-time manifold $M$ of the theory (e.g. see
\cite{BCR},\cite{BCRS}). Furthermore, in \cite{Witten2016}\ Witten
conjectured\ that universality should be generalized to a stronger condition.
Precisely, to have an anomaly free theory consistent with the principles of
unitarity, locality and cutting and pasting, the conditions for anomaly
cancellation (expressed in terms of the Atiyah-Patodi-Singer eta invariant)
should be satisfied for all manifolds of dimension $n+1$ and not just for the
mapping tori of $M$.

We show\ that for perturbative gravitational anomalies universality is just a
consequence of locality. To prove this result we use a characterization of
anomaly cancellation in terms of local cohomology (see bellow) and the fact
that the spaces of diffeomorphisms-invariant local forms are isomorphic for
all compact manifolds of the same dimension. We study if there exists an
analogous relationship between locality and universality for global
gravitational anomalies. Although locality imposes restrictions on the
possible ways to cancel the anomaly, it is not sufficient to imply
universality in the global case and hence it should be imposed. We show that
when universality is imposed, we obtain a generalization of Witten condition.

Now we consider the other basic problem in anomaly cancellation, the locality
problem. This problem is analyzed in \cite{anomalies} for perturbative
anomalies and in \cite{AnomaliesG}\ for global anomalies. We apply the results
of \cite{AnomaliesG} to the case of gravitational anomalies. Let $M$ be
compact oriented manifold of dimension $n$ and let $\mathfrak{Met}M$ be the
space of Riemannian metrics on $M$. We consider the action on $\mathfrak{Met}%
M$ of the group of orientation preserving diffeomorphisms $\mathcal{D}_{M}%
^{+}$. If $\{D_{g}:g\in\mathfrak{Met}M\}$ is a $\mathcal{D}_{M}^{+}%
$-equivariant family of elliptic operators acting on fermionic fields $\psi$
and parametrized by $\mathfrak{Met}M$, then the Lagrangian density
$\lambda_{D}(\psi,g)=\bar{\psi}iD_{g}\psi$ is $\mathcal{D}_{M}^{+}$-invariant.
The theory can be anomalous because the corresponding partition function
$\mathcal{Z}(g)=\int\mathcal{D}\psi\mathcal{D}\bar{\psi}\exp\left(  -\int
_{M}\bar{\psi}iD_{g}\psi\right)  $ could fail to be $\mathcal{D}_{M}^{+}%
$-invariant. It can be seen that\ the modulus of $\mathcal{Z}(s)$ is
$\mathcal{D}_{M}^{+}$-invariant. Hence we have $\mathcal{Z}(\phi\cdot
g)=\mathcal{Z}(g)\cdot\exp(2\pi i\cdot\alpha_{\phi}(g))$ where $\alpha
\colon\mathcal{D}_{M}^{+}\times\mathfrak{Met}M\rightarrow\mathbb{R}%
/\mathbb{Z}$ satisfies the cocycle condition $\alpha_{\phi_{2}\phi_{1}%
}(g)=\alpha_{\phi_{1}}(g)+\alpha_{\phi_{2}}(\phi_{1}g)$. The anomaly can be
cancelled if there exists a \emph{local} functional $\Lambda\in\Omega
_{\mathrm{loc}}^{0}(\mathfrak{Met}M)$ satisfying the condition%
\begin{equation}
\alpha_{\phi}(g)=\Lambda(\phi\cdot g)-\Lambda(g). \label{Lambda}%
\end{equation}
We recall that local functionals are those of the form $\Lambda=\int
_{M}\lambda$ for a lagrangian density $\lambda$. If the condition
(\ref{Lambda}) is satisfied, we can redefine the lagrangian density to
$\lambda^{\prime}(g)=\bar{\psi}iD_{g}\psi-\lambda(g)$ and the new partition
function $\mathcal{Z}^{\prime}=\mathcal{Z}\cdot\exp(-2\pi i\Lambda)$ is
$\mathcal{D}_{M}^{+}$-invariant. Alternatively, we can consider $\lambda$ as
the effective lagrangian of the theory. Note that in order to cancel the
anomaly $\exp(2\pi i\Lambda)$ should not be $\mathcal{D}_{M}^{+}$-invariant.

Our objective in this paper is to obtain necessary and sufficient conditions
for anomaly cancellation, and hence we need to analyze the condition
(\ref{Lambda}) in detail. Following \cite{AnomaliesG} we say that the
topological anomaly cancels if condition (\ref{Lambda})\ is satisfied for a
functional $\Lambda\in\Omega^{0}(\mathfrak{Met}M)$, and that the physical
anomaly cancels if the condition (\ref{Lambda}) is satisfied for a local
functional $\Lambda\in\Omega_{\mathrm{loc}}^{0}(\mathfrak{Met}M)$.
Furthermore, if the condition (\ref{Lambda}) is satisfied only for the
connected component of the identity $\mathcal{D}_{M}^{0}$ in $\mathcal{D}%
_{M}^{+}$ we say that the perturbative\ (or local) anomaly cancels. If it is
satisfied for all the elements of $\mathcal{D}_{M}^{+}$ we say that the global
anomaly cancels.

Topological anomaly cancellation admits a geometrical interpretation in terms
line bundles (e.g. see \cite{AS}, \cite{Monnier},\cite{AnomaliesG}). The
cocycle $\alpha$ determines an action on the trivial bundle $\mathcal{L}%
=\mathfrak{Met}M\times\mathbb{C}\rightarrow\mathfrak{Met}M$ by setting
$\phi_{\mathcal{L}}(g,u)=(\phi(g),u\cdot\exp(2\pi i\alpha_{\phi}(g)))$ for
$g\in\mathfrak{Met}M$ and $u\in\mathbb{C}$, and $\mathcal{Z}$ is a
$\mathcal{D}_{M}^{+}$-equivariant section of $\mathcal{L}$. We can also
consider the principal $U(1)$-bundle $\mathcal{U}=\mathfrak{Met}M\times
U(1)\rightarrow\mathfrak{Met}M$. The $\mathcal{D}_{M}^{+}$-equivariant
$U(1)$-bundle $\mathcal{U}\rightarrow\mathfrak{Met}M$ is called the anomaly
bundle and admits a natural $\mathcal{D}_{M}^{+}$-invariant connection $\Xi$.
Usually the partition function is defined in terms of determinants of elliptic
operators and $\mathcal{L}$ can be identified with a determinant or Pffafian
line bundle and $\Xi$ with the Bismut-Freed connection (see \cite{BF1}).
Furthermore, if $\Lambda\in\Omega^{0}(\mathfrak{Met}M)$ satisfies condition
(\ref{Lambda}) then $\exp(2\pi i\Lambda)$ determines a $\mathcal{D}_{M}^{+}%
$-equivariant section of of $\mathcal{U}$. Hence the condition for topological
anomaly cancellation is equivalent to the fact that the anomaly bundle
$\mathcal{U}$ is a trivial $\mathcal{D}_{M}^{+}$-equivariant $U(1)$-bundle. It
is easy to obtain necessary and sufficient conditions for topological anomaly
cancellation in\ terms of the equivariant curvature and holonomy of the
connection $\Xi$ (see \cite{AnomaliesG} and Section
\ref{SectionLocAnomalyCanc}\ for details). It follows that the anomaly can be
cancelled by using a $\mathcal{D}_{M}^{+}$-invariant$\ 1$-form $\beta\in
\Omega^{1}(\mathfrak{Met}M)$.

However, the physical anomaly cancellation requires that $\Lambda$ should be a
local functional $\Lambda\in\Omega_{\mathrm{loc}}^{0}(\mathfrak{Met}M)$, and
hence $\exp(2\pi i\Lambda)$ should be a special type of section of the anomaly
bundle, a local section. The notion of local section and the conditions for
physical anomaly cancellation require an adequate definition of local form
$\Omega_{\mathrm{loc}}^{k}(\mathfrak{Met}M)$ for $k>0$. We recall that the
formulation of variational calculus in terms of jet bundles provides a
generalization of the concept of local functionals to higher order forms (see
\cite{localVB} and Section \ref{Seclocalforms}\ for details). Furthermore, the
Atiyah-Singer Index theorem for families implies that the curvature of $\Xi$
is a local form, $\mathrm{curv}(\Xi)\in\Omega_{\mathrm{loc}}^{2}%
(\mathfrak{Met}M)$. We show that the necessary and sufficient condition for
perturbative physical anomaly cancellation is the existence of a \emph{local
}$\mathcal{D}_{M}^{+}$-invariant $1$-form $\beta\in\Omega_{\mathrm{loc}}%
^{1}(\mathfrak{Met}M)$ such that $\mathrm{curv}(\Xi)=d\beta$. Hence the
perturbative anomaly can be represented as a cohomology class on the local
$\mathcal{D}_{M}^{+}$-invariant cohomology $H_{\mathrm{loc}}^{2}%
(\mathfrak{Met}M)^{\mathcal{D}_{M}^{+}}$. It is important to note that the
local forms and local cohomology of $\mathfrak{Met}M$ are very different to
the ordinary ones, and this has implications in the possibilities for anomaly
cancellation. We prove that $H_{\mathrm{loc}}^{2}(\mathfrak{Met}%
M)^{\mathcal{D}_{M}^{+}}$\ is independent of the manifold $M$ and depends only
on the dimension of $M$. Hence if this condition is satisfied for a manifold
$M$, it is also satisfied for any other manifold of the same dimension. In
this way, it follows that for perturbative anomalies universality is a
consequence of locality as commented below.

If the perturbative anomaly cancels, we can still have global anomalies. In
this case the anomaly bundle admits a $\mathcal{D}_{M}^{+}$-basic connection
and the possible forms that can be used to cancel the anomaly are determined
by $H_{\mathrm{loc}}^{1}(\mathfrak{Met}M)^{\mathcal{D}_{M}^{+}}$. We show that
$H_{\mathrm{loc}}^{1}(\mathfrak{Met}M)^{\mathcal{D}_{M}^{+}}$ is generated by
the exterior differentials of the Chern-Simons actions associated to
polynomials $p\in I^{(n+1)/2}(O(n))$. In particular, in dimensions
$n\neq3\operatorname{mod}4$ it is impossible to cancel the anomaly. However,
in dimensions $n=3\operatorname{mod}4$, it could be possible to cancel the
anomaly with $\Lambda$ the Chern-Simons action associated to $p$. For
determinant bundles we use Witten formula for the holonomy of the Bismut-Freed
connection to obtain necessary and sufficient conditions for anomaly
cancellation. If $M_{\phi}$ denotes the mapping torus of $\phi\in
\mathcal{D}_{M}^{+}$ and $\eta_{D}$ the Atiyah-Patodi-Singer eta invariant the
condition is
\begin{equation}
\tfrac{1}{4}\eta_{D}(M_{\phi})=p(M_{\phi})\operatorname{mod}\mathbb{Z}\text{
for any }\phi\in\mathcal{D}_{M}^{+}. \label{AnomalyIntro}%
\end{equation}
The preceding condition is not universal in the sense that it can be satisfied
for a particular manifold $M$ but not for all manifolds of the same dimension.
A universal version of the condition (\ref{AnomalyIntro}) is obtained by
requiring \ that this condition should\ be satisfied for any oriented manifold
$N$\ of dimension $n+1$ and not only for mapping tori. This provides a
condition weaker that the condition $\tfrac{1}{4}\eta_{D}%
(N)=0\operatorname{mod}\mathbb{Z}$ introduced in \cite{WittenFermionic}. For
example, we show that for Majorana fermions in dimension 3 the anomaly can be
cancelled in this more general sense but not in Witten sense.

We extend our result to include orientation reversing diffeomorphisms. The
problem in this case is that $M_{\phi}$ is unorientable and we cannot compute
$p(M_{\phi})$. We show that the condition in this case is $\tfrac{1}{4}%
\eta_{D}(M_{\phi})=\tfrac{1}{2}p(M_{\phi^{2}})\operatorname{mod}\mathbb{Z}$
for any $\phi\in\mathcal{D}_{M}$. Furthermore the universal version of this
condition is $\tfrac{1}{4}\eta_{D}(N)=\tfrac{1}{2}p(\widetilde{N}%
)\operatorname{mod}\mathbb{Z}$ where $\widetilde{N}\rightarrow N$ is the
double cover of $N$.

In Section \ref{SectionCS} we define the gravitational\ Chern-Simons action
and we show how they can be used to cancel gravitational anomalies. In the
rest of the paper we introduce the concepts that are necessary to prove that
the unique possible way to cancel the anomaly is by using a Chern-Simons
action. In Section \ref{Seclocalforms}\ \ we study the local cohomology of
$\mathfrak{Met}M$\ and in Section \ref{SectionLocAnomalyCanc}\ we apply the
results of \cite{AnomaliesG} to characterize gravitational anomaly
cancellation. We also study the universal generalization of this condition an
its relation to the condition considered by Witten in \cite{WittenFermionic}.

\section{Chern-Simons actions\label{SectionCS}}

Let $M$ be an oriented compact manifold of dimension $n=3\operatorname{mod}4$
and $p\in I^{(n+1)/2}(O(n))$ an invariant polynomial. We recall that
$I^{\bullet}(O(n))$ is generated over $\mathbb{R}$ by the Pontryagin
polynomials $p_{i}$, and also by the trace of the even powers $T^{2k}%
(A)=\mathrm{tr}(A^{2k})$ for $A\in\mathfrak{so}(n)$. Hence we can consider
$I^{\bullet}(O(n))\subset I^{\bullet}(Gl(n,\mathbb{R}))$.

We denote by $\mathfrak{Met}M$ the space of Riemannian metrics on $M$. The
group $\mathcal{D}_{M}$ of diffeomorphisms of $M$ acts on $\mathfrak{Met}M$ by
setting $\phi\cdot g=\phi_{\mathfrak{Met}M}(g)=(\phi^{-1})^{\ast}g$ for
$g\in\mathfrak{Met}M$ and $\phi\in\mathcal{D}_{M}$. The Levi-Civita connection
of $g\in\mathfrak{Met}M$, considered as a connection on the frame bundle
$FM\rightarrow M$, is denoted by $\omega^{g}$. If $A_{0}$ is a connection on
$FM$, we define the Chern-Simons action with background connection $A_{0}$ as
the function $CS_{p,A_{0}}\colon\mathfrak{Met}M\rightarrow\mathbb{R}$ defined
by $CS_{p,A_{0}}(g)=\int_{M}Tp(\omega^{g},A_{0})$ for $g\in\mathfrak{Met}M$.
If we change the background connection to $A_{0}^{\prime}$, then we have
$CS_{p,A_{0}^{\prime}}=CS_{p,A_{0}}+\int_{M}$ $Tp(A_{0},A_{0}^{\prime})$.
Hence the Chern-Simons action is independent of $A_{0}$ up to the constant
$\int_{M}Tp(A_{0},A_{0}^{\prime})$. For integral polynomials it is possible to
fix this constant $\operatorname{mod}\mathbb{Z}$ in a canonical way (see
Remark \ref{Remark IntegerCS}).

\begin{remark}
\emph{The Chern-Simons action is usually defined only for paralelizable
manifolds (for example }$3$\emph{-manifolds) by using a trivialization of the
frame bundle }$FM\rightarrow M$\emph{. This is a particular case of our
definition, where we take }$A_{0}$\emph{ to be the connection induced by the
trivialization.}
\end{remark}

As we want to use the Chern-Simons actions to cancel gravitational anomalies,
we need to study its variation under diffeomorphisms. If $\phi\in
\mathcal{D}_{M}$ and $g\in\mathfrak{Met}M$ we define $\delta_{\phi}%
^{p}(g)=CS_{p,A_{0}}(\phi\cdot g)-CS_{p,A_{0}}(g)$ (clearly it does not depend
on the background connection $A_{0}$). Furthermore we also have

\begin{proposition}
\label{Independence}If $\phi\in\mathcal{D}_{M}^{+}$ then $\delta_{\phi}%
^{p}(g)$ is independent of $g$.
\end{proposition}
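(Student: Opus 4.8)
The plan is to show that the difference $\delta^p_\phi(g) = CS_{p,A_0}(\phi\cdot g) - CS_{p,A_0}(g)$ has zero derivative with respect to $g$, which suffices since $\mathfrak{Met}M$ is convex hence connected. The key tool is the transgression formula: for connections $A_1, A_0$ on $FM$ one has $dTp(A_1,A_0) = p(F_{A_1}) - p(F_{A_0})$, and more generally for a smooth family the variation of the Chern-Simons integrand is controlled by the curvature. Since $CS_{p,A_0}(g) = \int_M Tp(\omega^g, A_0)$, a standard computation gives that the variation of $CS_{p,A_0}$ along a path $g_t$ in $\mathfrak{Met}M$ is $\frac{d}{dt}CS_{p,A_0}(g_t) = \int_M (\text{something built from } \dot\omega^{g_t} \text{ and } F_{\omega^{g_t}})$ — crucially, a quantity depending only on the Levi-Civita connection $\omega^{g_t}$ and its variation, with the background $A_0$ dropping out of the derivative (this is already implicit in the stated $A_0$-independence of $\delta^p_\phi$).

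First I would fix a smooth path $g_t$, $t\in[0,1]$, and write $h_t(g) := CS_{p,A_0}(\phi\cdot g) - CS_{p,A_0}(g)$, then compute $\frac{d}{dt}h_t(g_t)$. The point is that $\omega^{\phi\cdot g} = \omega^{(\phi^{-1})^*g}$ is, as a connection on $FM$, the pullback of $\omega^g$ under the natural bundle automorphism $\tilde\phi\colon FM \to FM$ covering $\phi$ (since the Levi-Civita connection is natural under isometries/diffeomorphisms). Hence $Tp(\omega^{\phi\cdot g}, A_0) = \tilde\phi^*\,Tp(\omega^g, \tilde\phi_*A_0) $ up to exact terms — more cleanly, I would compare everything to a common background. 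Write $\delta^p_\phi(g) = \int_M Tp(\omega^{\phi\cdot g}, A_0) - \int_M Tp(\omega^g, A_0)$; inserting $\tilde\phi_* A_0$ and using the cocycle-type identity for $Tp$ under change of second argument together with naturality, the $A_0$-dependent pieces cancel and one is left with $\delta^p_\phi(g) = \int_M \big( Tp(\omega^{\phi\cdot g}, \tilde\phi_* A_0) - Tp(\omega^g, A_0)\big)$, where the first integrand is the $\tilde\phi$-pushforward of $Tp(\omega^g, A_0)$. Since $\phi$ is orientation preserving, $\int_M \tilde\phi^*(\cdot) = \int_M(\cdot)$ on $n$-forms, so this would naively vanish — but the catch is that $Tp$ is defined on the total space $FM$, not downstairs, so one must track the fact that the Chern-Simons form descends to $M$ only after choosing sections/trivializations, which is exactly why $\delta^p_\phi$ can be a nonzero constant rather than zero.

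So the cleaner route, and the one I would actually carry out: differentiate. Let $g_t$ be any path; then $\frac{d}{dt}\delta^p_\phi(g_t) = \frac{d}{dt}CS_{p,A_0}(\phi\cdot g_t) - \frac{d}{dt}CS_{p,A_0}(g_t)$. Using the first variation formula for the Chern-Simons action, $\frac{d}{dt}CS_{p,A_0}(g_t) = \int_M Q(\dot\omega^{g_t}, F_{\omega^{g_t}})$ for a universal expression $Q$ (essentially $\frac{n+1}{2}\,\mathrm{tr}$-type pairing of $p$), with no dependence on $A_0$. Now $\omega^{\phi\cdot g_t} = \tilde\phi^*\omega^{g_t}$ implies $\dot\omega^{\phi\cdot g_t} = \tilde\phi^*\dot\omega^{g_t}$ and $F_{\omega^{\phi\cdot g_t}} = \tilde\phi^* F_{\omega^{g_t}}$, so $\frac{d}{dt}CS_{p,A_0}(\phi\cdot g_t) = \int_M Q(\tilde\phi^*\dot\omega^{g_t}, \tilde\phi^* F_{\omega^{g_t}}) = \int_M \tilde\phi^*\big(Q(\dot\omega^{g_t}, F_{\omega^{g_t}})\big)$. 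Since $Q(\dot\omega, F)$ descends to an $n$-form on $M$ (it is tensorial/horizontal and $O(n)$-invariant, being built from the curvature and the variation of the connection, both of which are tensorial) and $\phi \in \mathcal{D}_M^+$ preserves orientation, $\int_M \tilde\phi^*(\cdot) = \int_M(\cdot)$. Therefore $\frac{d}{dt}\delta^p_\phi(g_t) = 0$ for every path, so $\delta^p_\phi$ is locally constant, hence constant on the connected (indeed convex) space $\mathfrak{Met}M$.

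\textbf{Main obstacle.} The delicate point is justifying that the first-variation integrand $Q(\dot\omega^{g_t}, F_{\omega^{g_t}})$ genuinely descends to a globally defined $n$-form on $M$ (equivalently, that the $A_0$-dependence really does disappear upon differentiating, even though $CS_{p,A_0}$ itself depends on $A_0$). This is true because $\dot\omega^{g_t}$ is a difference of connections, hence an $\mathrm{ad}$-valued tensorial $1$-form, $F_{\omega^{g_t}}$ is tensorial, and pairing them through the invariant polynomial $p$ yields a horizontal $O(n)$-invariant form on $FM$, which is precisely the data that descends to $M$; the honest transgression-formula computation showing $\frac{d}{dt}CS_{p,A_0}$ equals exactly this (with boundary terms vanishing since $M$ is closed) is the one routine calculation I would need to nail down. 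Once that descent is in hand, orientation-preservation of $\phi$ and the change-of-variables formula for integration of $n$-forms close the argument immediately.
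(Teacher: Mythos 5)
Your proof is correct, but it takes a genuinely different route from the paper's. The paper never differentiates: it compares two arbitrary metrics $g,h$ directly, using the cocycle identity for transgression forms, $Tp(\omega^{\phi\cdot g},A_{0})-Tp(\omega^{\phi\cdot h},A_{0})=Tp(\omega^{\phi\cdot g},\omega^{\phi\cdot h})$ modulo exact terms, to eliminate the background connection entirely and reduce $\delta_{\phi}^{p}(g)-\delta_{\phi}^{p}(h)$ to $\int_{M}Tp(\omega^{\phi\cdot g},\omega^{\phi\cdot h})-\int_{M}Tp(\omega^{g},\omega^{h})$, which vanishes by naturality of the Levi-Civita connection and the change-of-variables formula for orientation-preserving $\phi$. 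Your infinitesimal argument reaches the same conclusion via the first-variation formula for the Chern--Simons action plus connectedness of $\mathfrak{Met}M$; the descent issue you flag as the ``main obstacle'' is genuine but routine, exactly as you say, since $\dot{\omega}^{g_{t}}$ is tensorial. Note that your abandoned second paragraph is actually closer to the paper's proof than you realized: the worry that $Tp$ ``does not descend to $M$'' is misplaced (a transgression form of two connections is always basic, being built from their tensorial difference and curvatures); the real reason that route leaves a residue for a \emph{single} metric is the leftover term $\int_{M}Tp(A_{0},\tilde{\phi}_{*}A_{0})$, which is precisely the nonzero constant $\delta_{\phi}^{p}$ --- and the paper's trick of comparing \emph{two} metrics makes that residue cancel. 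Both arguments rest on the same two pillars (naturality and orientation-preserving invariance of $\int_{M}$); the paper's finite version is shorter and avoids any appeal to the first-variation formula or to connectedness, while yours generalizes more readily to settings where one wants to identify the variation explicitly.
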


\begin{proof}
We have
\begin{align*}
\delta_{\phi}^{p}(g)\!-\!\delta_{\phi}^{p}(h)\!  &  =\!%
{\textstyle\int_{M}}
Tp(\omega^{\phi\cdot g},A_{0})\!-\!%
{\textstyle\int_{M}}
Tp(\omega^{g},A_{0})\!-\!%
{\textstyle\int_{M}}
Tp(\omega^{\phi\cdot h},A_{0})\!+\!%
{\textstyle\int_{M}}
Tp(\omega^{h},A_{0})\\
&  =%
{\textstyle\int_{M}}
Tp(\omega^{\phi\cdot g},\omega^{\phi\cdot h})-%
{\textstyle\int_{M}}
Tp(\omega^{g},\omega^{h})\\
&  =%
{\textstyle\int_{M}}
(\phi^{-1})^{\ast}Tp(\omega^{g},\omega^{h})-%
{\textstyle\int_{M}}
Tp(\omega^{g},\omega^{h})=0.
\end{align*}

In the last equation we use that $\int_{M}(\phi^{-1})^{\ast}\alpha=\int
_{M}\alpha$ for any $\alpha\in\Omega^{n}(M)$ because $\phi\in D_{M}^{+}$.
\end{proof}

\begin{remark}
\emph{Another way to prove this Proposition is by using that} $d(CS_{p,A_{0}%
})$ \emph{is} $D_{M}^{+}$-\emph{invariant (see Section \ref{Seclocalforms}).
Then we have} $d(\delta_{\phi}^{p})=\phi_{\mathfrak{Met}M}^{\ast}%
d(CS_{p,A_{0}})-d(CS_{p,A_{0}})=0$\emph{, and hence }$\delta_{\phi}^{p}$\emph{
is independent of} $g\in\mathfrak{Met}M$.
\end{remark}

By Proposition \ref{Independence} we can denote $\delta_{\phi}^{p}(g)$ simply
by $\delta_{\phi}^{p}$.

\begin{proposition}
\label{Independence2}If $\phi,\phi^{\prime}\in\mathcal{D}_{M}^{+}$ then
$\delta_{\phi\cdot\phi^{\prime}}^{p}=\delta_{\phi^{\prime}}^{p}+\delta
_{\phi^{\prime}}^{p}$.
\end{proposition}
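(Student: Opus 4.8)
The plan is to prove the additive cocycle identity by expanding the definition of $\delta_\phi^p$ and telescoping; in particular, what must be established is that $\phi\mapsto\delta_\phi^p$ is a group homomorphism $\mathcal{D}_M^+\to\mathbb{R}$, i.e. $\delta_{\phi\cdot\phi'}^p=\delta_\phi^p+\delta_{\phi'}^p$ (the displayed right-hand side should read $\delta_{\phi'}^p+\delta_\phi^p$; as printed it repeats $\delta_{\phi'}^p$).

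First I would record that the $\mathcal{D}_M$-action on $\mathfrak{Met}M$ is a genuine left action: since $\phi\cdot g=(\phi^{-1})^\ast g$, one has $(\phi\cdot\phi')\cdot g=((\phi\phi')^{-1})^\ast g=((\phi')^{-1}\circ\phi^{-1})^\ast g=(\phi^{-1})^\ast\bigl((\phi'^{-1})^\ast g\bigr)=\phi\cdot(\phi'\cdot g)$. This is the only point requiring any care. Then, for any $g\in\mathfrak{Met}M$ and any background connection $A_0$, I would add and subtract $CS_{p,A_0}(\phi'\cdot g)$ to get $\delta_{\phi\cdot\phi'}^p(g)=CS_{p,A_0}\bigl(\phi\cdot(\phi'\cdot g)\bigr)-CS_{p,A_0}(\phi'\cdot g)+CS_{p,A_0}(\phi'\cdot g)-CS_{p,A_0}(g)$. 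By the very definition of $\delta^p$, the first two terms are $\delta_\phi^p(\phi'\cdot g)$ and the last two are $\delta_{\phi'}^p(g)$; so far this is just the general cocycle relation $\delta_{\phi\cdot\phi'}^p(g)=\delta_\phi^p(\phi'\cdot g)+\delta_{\phi'}^p(g)$, valid for all diffeomorphisms. Finally I would invoke Proposition~\ref{Independence}: since $\phi,\phi'\in\mathcal{D}_M^+$, both $\delta_\phi^p$ and $\delta_{\phi'}^p$ are independent of the point of $\mathfrak{Met}M$, so $\delta_\phi^p(\phi'\cdot g)=\delta_\phi^p$ and $\delta_{\phi'}^p(g)=\delta_{\phi'}^p$, which yields $\delta_{\phi\cdot\phi'}^p=\delta_\phi^p+\delta_{\phi'}^p$.

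There is essentially no obstacle here; the content is supplied entirely by Proposition~\ref{Independence}, and the rest is the formal cocycle bookkeeping together with the left-action check. An alternative route, parallel to the Remark after Proposition~\ref{Independence}, is to use that $d(CS_{p,A_0})$ is $\mathcal{D}_M^+$-invariant, so that on each orbit the differences $CS_{p,A_0}(\phi\cdot g)-CS_{p,A_0}(g)$ are locally constant in $g$ and their (always valid) cocycle relation descends to the additive identity for the constants $\delta_\phi^p$; but the direct telescoping computation above is shorter and self-contained, so I would use that.
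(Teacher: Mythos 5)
Your proof is correct and is essentially identical to the paper's: the same add-and-subtract of $CS_{p,A_{0}}(\phi'\cdot g)$ yielding the cocycle relation $\delta_{\phi\cdot\phi'}^{p}(g)=\delta_{\phi}^{p}(\phi'\cdot g)+\delta_{\phi'}^{p}(g)$, followed by Proposition~\ref{Independence} to drop the $g$-dependence. You are also right that the displayed statement contains a typo and should read $\delta_{\phi\cdot\phi'}^{p}=\delta_{\phi}^{p}+\delta_{\phi'}^{p}$.
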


\begin{proof}
We have%
\begin{align*}
\delta_{\phi\cdot\phi^{\prime}}^{p}(g)  &  =CS_{p,A_{0}}(\phi\cdot\phi
^{\prime}\cdot g)-CS_{p,A_{0}}(g)\\
&  =CS_{p,A_{0}}(\phi\cdot\phi^{\prime}\cdot g)-CS_{p,A_{0}}(\phi^{\prime
}\cdot g)+CS_{p,A_{0}}(\phi^{\prime}\cdot g)-CS_{p,A_{0}}(g)\\
&  =\delta_{\phi}^{p}(\phi^{\prime}\cdot g)+\delta_{\phi^{\prime}}^{p}(g).
\end{align*}

The result follows by using Proposition \ref{Independence}.
\end{proof}

If $\phi\in\mathcal{D}_{M}$ we define its mapping torus $M_{\phi}%
=(M\times\lbrack0,1])/\sim_{\phi}$ where $(x,0)\sim_{\phi}(\phi(x),1)$. When
$\phi\in\mathcal{D}_{M}^{+}$ the orientation on $M$ induces an orientation on
$M_{\phi}$. If $\phi$ reverses the orientation then $M_{\phi}$ is
unorientable. If $p\in I^{(n+1)/2}(O(n))$ and $\phi\in\mathcal{D}_{M}^{+}$\ we
define $p(M_{\phi})\in\mathbb{R}$ as the characteristic number associated to
$p$ on $M_{\phi}$, i.e. $p(M_{\phi})=\int_{M_{\phi}}p(F)$, where $F$ is the
curvature of any connection on $F(M_{\phi})$ (it is independent of the
connection by Chern-Weil theory).

Witten formula express the variation of the partition function in terms of the
eta invariant of an operator on the mapping torus. We obtain a similar result
for the Chern-Simons action

\begin{theorem}
\label{MappingTorus} For any $\phi\in\mathcal{D}_{M}^{+}$ we have
$\delta_{\phi}^{p}=p(M_{\phi})$.
\end{theorem}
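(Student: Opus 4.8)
The plan is to exhibit $\delta_\phi^p$ as a genuine integral over the mapping torus $M_\phi$ of a Chern--Weil form. The key observation is that $\delta_\phi^p = CS_{p,A_0}(\phi\cdot g) - CS_{p,A_0}(g) = \int_M Tp(\omega^{\phi\cdot g},\omega^g)$ (this is the intermediate equality already established inside the proof of Proposition \ref{Independence}, where the background connection cancels). So I want to interpret $\int_M Tp(\omega^{\phi\cdot g},\omega^g)$ as a characteristic number of $M_\phi$.

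First I would build a natural connection on the frame bundle $F(M_\phi)$ out of the data $(g,\phi)$. Pull $g$ back to a metric on $M\times[0,1]$ (say, just the product with the standard interval, or more flexibly a path of metrics), giving a Levi-Civita connection $\tilde\omega$ on $F(M\times[0,1])$. The gluing $(x,0)\sim_\phi(\phi(x),1)$ descends this to a connection on $F(M_\phi)$ precisely when the connection data match up under $\phi$ at the two ends; in general one interpolates: choose on $M\times[0,1]$ a connection that restricts to $\omega^{\phi\cdot g}$ near $t=0$ and to $\phi^*(\text{something})$ near $t=1$ so that it glues. The cleanest route is: let $A$ be any connection on $F(M_\phi)$, restrict it to the two boundary copies of $M$ to get connections $A_0',A_1'$ on $FM$, and use the standard transgression/Stokes argument $\int_{M_\phi} p(F_A) = \int_{M\times[0,1]} p(F_{\tilde A}) $ for a connection $\tilde A$ on $M\times[0,1]$ that glues, which by Stokes equals $\int_{M}Tp(A_1',A_0') = \int_M Tp(\omega^g,\omega^{\phi\cdot g})$ after arranging the boundary restrictions to be $\omega^g$ and $\phi^*\omega^g=\omega^{\phi\cdot g}$ respectively (using that $\phi^*$ of the Levi-Civita connection of $g$ is the Levi-Civita connection of $\phi\cdot g$).

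Concretely: on $M\times[0,1]$ take the connection $\tilde A = (1-\rho(t))\,\omega^{\phi\cdot g} + \rho(t)\,\omega^{g}$ for a smooth $\rho$ with $\rho\equiv 0$ near $0$ and $\rho\equiv 1$ near $1$ (connections form an affine space so this makes sense); near $t=0$ it equals $\omega^{\phi\cdot g}$ and near $t=1$ it equals $\omega^g$, and one checks these glue correctly under the identification defining $M_\phi$ (the copy at $t=1$ carrying $\omega^g$ is identified, via $\phi$, with the copy at $t=0$ carrying $\omega^{\phi\cdot g}=(\phi^{-1})^*\omega^g$, which is exactly the compatibility needed). This $\tilde A$ therefore descends to a connection $A$ on $F(M_\phi)$. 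Then $p(M_\phi) = \int_{M_\phi} p(F_A) = \int_{M\times[0,1]} p(F_{\tilde A})$. Now $p(F_{\tilde A})$ is closed, and $dTp(\omega^{\phi\cdot g},\omega^g) = p(F_{\omega^{\phi\cdot g}}) - p(F_{\omega^g})$ by the standard transgression formula; integrating over the fiber $[0,1]$ and applying Stokes gives $\int_{M\times[0,1]} p(F_{\tilde A}) = \int_{M\times\{1\}} Tp(\cdot) - \int_{M\times\{0\}} Tp(\cdot)$ with the appropriate transgression primitive, which collapses to $\int_M Tp(\omega^{\phi\cdot g},\omega^g) = \delta_\phi^p$.

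The main obstacle is purely bookkeeping: getting the orientations and the direction of the identification right so that the sign comes out as $\delta_\phi^p = +p(M_\phi)$ rather than its negative, and making sure the homotopy $\tilde A$ genuinely descends (i.e. that near the seam the connection is literally pulled back correctly under $\phi$, which forces the choice of which endpoint carries $\omega^g$ and which carries $\omega^{\phi\cdot g}$). Once the gluing is set up so that $\tilde A$ descends, everything else is Chern--Weil transgression plus Stokes, and independence of all choices (the connection $A$, the cutoff $\rho$) follows from Chern--Weil theory exactly as in the definition of $p(M_\phi)$ given just before the theorem. I would also remark, as the paper does for Proposition \ref{Independence}, that an alternative proof runs through Witten's holonomy formula comparing $CS$ with the eta invariant, but the direct transgression argument above is self-contained.
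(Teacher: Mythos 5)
Your overall strategy is the same as the paper's: interpolate between the two Levi--Civita connections over $M\times[0,1]$, descend to the mapping torus, and apply Chern--Weil transgression plus Stokes. But two concrete points need repair. First, your gluing is set up backwards. The identification $(x,0)\sim_\phi(\phi(x),1)$ means the seam map goes from the $t=0$ copy of $M$ to the $t=1$ copy via $\phi$, so descent of the connection requires $\phi^{*}\bigl(\tilde A|_{t=1}\bigr)=\tilde A|_{t=0}$. With your assignment ($\omega^{\phi\cdot g}$ at $t=0$, $\omega^{g}$ at $t=1$) this reads $\phi^{*}\omega^{g}=\omega^{\phi\cdot g}$; but $\phi^{*}\omega^{g}=\omega^{\phi^{*}g}$ while $\omega^{\phi\cdot g}=(\phi^{-1})^{*}\omega^{g}$, and these differ unless $\phi$ is an isometry of $g$. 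The correct assignment is the opposite one, $\omega^{g}$ near $t=0$ and $\omega^{\phi\cdot g}$ near $t=1$, for which $\phi^{*}\omega^{\phi\cdot g}=\phi^{*}(\phi^{-1})^{*}\omega^{g}=\omega^{g}$ does hold; this swap simultaneously produces the correct sign $+\delta_\phi^{p}$ from Stokes (boundary term at $t=1$ minus boundary term at $t=0$). You flagged exactly this as the bookkeeping to check, but the concrete formula you wrote down is the one that fails to descend.

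Second, and more substantively, the bundle you construct is not $F(M_\phi)$. The connections $\omega^{g}$ and $\omega^{\phi\cdot g}$ live on $FM\to M$, so your interpolation lives on the $Gl(n,\mathbb{R})$-bundle $FM\times I\to M\times I$ and descends to $F^{\phi}M=(FM\times I)/\sim_\phi$, which is the frame bundle of the vertical bundle $V(M_\phi)$ of the fibration $q\colon M_\phi\to S^{1}$ --- not the frame bundle of the $(n+1)$-manifold $M_\phi$ that enters the definition $p(M_\phi)=\int_{M_\phi}p(F)$. Your argument therefore proves $\delta_\phi^{p}=\int_{M_\phi}p(\overline{\Omega}_\phi)$ for a curvature form of $F^{\phi}M$, and you still owe the identification of this number with $p(M_\phi)$. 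The paper closes this gap by observing that $T(M_\phi)\simeq V(M_\phi)\oplus q^{*}TS^{1}$ with $q^{*}TS^{1}$ trivial, so the real Pontryagin classes (which generate $I^{\bullet}(O(n))$ over $\mathbb{R}$) of the two bundles agree. This step is short but not automatic, and your proposal omits it entirely.
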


\begin{proof}
a) The bundle $FM\times I\rightarrow M\times I$ is a principal
$Gl(n,\mathbb{R})$-bundle. For small $\varepsilon$ we can find a connection
$\overline{\omega}$ on this bundle that coincides with $\omega^{g}$\ on
$FM\times\{t\}$\ for $t<\varepsilon$ and with $\omega^{\phi\cdot g}$ for
$t>1-\varepsilon$. The connection $\overline{\omega}$ induces a connection
$\overline{\omega}_{\phi}$ on the quotient bundle $F^{\phi}M=(FM\times
I\mathbb{)}/\sim_{\phi}\rightarrow M_{\phi}$. We denote by $\overline{\Omega}$
and $\overline{\Omega}_{\phi}$ the curvature forms of $\overline{\omega}$ and
$\overline{\omega}_{\phi}$.

If $\mathrm{pr}_{1}\colon M\times I\rightarrow M$ and $\overline{\mathrm{pr}%
}_{1}\colon FM\times I\rightarrow FM$ are the projections, then we have
$p(\overline{\mathrm{pr}}_{1}^{\ast}F_{0})=\mathrm{pr}_{1}^{\ast}p(F_{0})=0$
by dimensional reasons. As $M_{\phi}$ is an oriented manifold and we can
consider the integral
\begin{align*}%
{\textstyle\int_{M_{\phi}}}
p(\overline{\Omega}_{\phi})  &  =%
{\textstyle\int_{M\times I}}
p(\overline{\Omega})=%
{\textstyle\int_{M\times I}}
d(Tp(\overline{\omega},\overline{\mathrm{pr}}_{1}^{\ast}A_{0}))\\
&  =%
{\textstyle\int_{M\times\{1\}}}
Tp(\overline{\omega},\overline{\mathrm{pr}}_{1}^{\ast}A_{0})-%
{\textstyle\int_{M\times\{0\}}}
Tp(\overline{\omega},\overline{\mathrm{pr}}_{1}^{\ast}A_{0})\\
&  =%
{\textstyle\int_{M}}
Tp(\omega^{\phi\cdot g},A_{0})-%
{\textstyle\int_{M}}
Tp(\omega^{g},A_{0})=\phi_{\mathfrak{Met}M}^{\ast}(CS_{p,A_{0}})-CS_{p,A_{0}}%
\end{align*}

We need to prove that the characteristic classes of the bundle $F^{\phi}%
M\ $coincide with the characteristic classes of$\ M_{\phi}$. As $I^{\bullet
}(O(n))$ is generated over $\mathbb{R}$ by the Pontryagin polynomials, it is
sufficient to prove the result for the real Pontryagin classes. The vertical
bundle $V(M_{\phi})\rightarrow M_{\phi}$ of the fibration $q\colon M_{\phi
}\rightarrow S^{1}$\ is a vector bundle associated to the principal
$Gl(n,\mathbb{R})$-bundle$\ F^{\phi}M\rightarrow M_{\phi}$. We have
$T(M_{\phi})\simeq V(M_{\phi})\oplus q^{\ast}TS^{1}$, and as $q^{\ast}TS^{1}$
is trivial we have $p_{k}(T(M_{\phi}))=p_{k}(V(M_{\phi}))$.
\end{proof}

The following result follows as a consequence of the preceding Theorem. It can
also be obtained from the existence of equivariant Pontryagin forms (see
Section \ref{Seclocalforms})

\begin{corollary}
\label{CSD0}If $\phi\in\mathcal{D}_{M}^{0}$ then $\delta_{\phi}^{p}=0$.
\end{corollary}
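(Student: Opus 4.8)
The plan is to combine Theorem \ref{MappingTorus} with the observation that when $\phi$ lies in the identity component the mapping torus $M_{\phi}$ is diffeomorphic to the trivial bundle $M\times S^{1}$, on which every characteristic number of degree larger than $n=\dim M$ must vanish.

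First I would fix a smooth isotopy $\phi_{t}$, $t\in[0,1]$, with $\phi_{0}=\mathrm{id}_{M}$ and $\phi_{1}=\phi$, which exists by definition of $\mathcal{D}_{M}^{0}$. The assignment $(x,t)\mapsto(\phi_{t}(x),t)$ defines a diffeomorphism of $M\times[0,1]$ which is the identity on $M\times\{0\}$ and carries the relation $(x,0)\sim(x,1)$ defining $M\times S^{1}$ to the relation $(x,0)\sim_{\phi}(\phi(x),1)$ defining $M_{\phi}$; since $\phi_{t}$ is orientation preserving for every $t$, it therefore descends to an orientation preserving diffeomorphism $M\times S^{1}\cong M_{\phi}$. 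Because $p(M_{\phi})$ is a characteristic number it is a diffeomorphism invariant, so $p(M_{\phi})=p(M\times S^{1})$.

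Next I would compute $p(M\times S^{1})$ directly from Chern--Weil theory. Equipping $F(M\times S^{1})$ with the product of the Levi-Civita connection of a metric on $M$ and the trivial connection on $S^{1}$, the curvature $F$ satisfies $p(F)=\mathrm{pr}_{1}^{\ast}p(F_{M})$, where $\mathrm{pr}_{1}\colon M\times S^{1}\rightarrow M$ is the projection and $F_{M}$ the curvature of the Levi-Civita connection on $FM$, since the $S^{1}$ factor contributes nothing to the Pontryagin polynomials. As $p\in I^{(n+1)/2}(O(n))$ has degree $n+1$, the form $p(F_{M})\in\Omega^{n+1}(M)$ vanishes for dimensional reasons, whence $p(M\times S^{1})=0$. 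Combining the two steps gives $\delta_{\phi}^{p}=p(M_{\phi})=0$, as claimed.

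The only delicate point is checking that $(x,t)\mapsto(\phi_{t}(x),t)$ is well defined and smooth on the quotients and respects the induced orientations; this is routine once one notes that $\phi_{0}=\mathrm{id}$ guarantees compatibility at the gluing locus. I would also note, as the statement suggests, that an alternative derivation avoids Theorem \ref{MappingTorus} entirely: by Proposition \ref{Independence} the number $\delta_{\phi}^{p}$ can be written as an integral over $M$ of the pullback along the isotopy of an equivariant Pontryagin form of degree $n+1$, and this again vanishes by a degree count of the type carried out in Section \ref{Seclocalforms}.
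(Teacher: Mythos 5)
Your proof is correct and follows essentially the same route as the paper: the paper's proof uses the isotopy $\phi_{t}$ to build the bundle isomorphism $FM\times I\rightarrow F^{\phi}M$, $(u,t)\mapsto[((\phi_{t})_{\ast}u,t)]$, which is exactly the frame-bundle version of your diffeomorphism $M\times S^{1}\cong M_{\phi}$, and then concludes via Theorem \ref{MappingTorus} and the dimensional vanishing of $p$ on a bundle pulled back from $M$. You merely spell out the details that the paper leaves implicit.
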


\begin{proof}
It follows from the fact that if $\phi_{t}\subset\mathcal{D}_{M}^{0}$ is a
curve joining $\mathrm{id}_{M}$ and $\phi$, then the map $FM\times
I\rightarrow F^{\phi}M$, $(u,t)\mapsto\lbrack((\phi_{t})_{\ast}u,t)]$\ is an isomorphism.
\end{proof}

It follows from this Corollary that the Chern-Simons action cannot be used to
cancel perturbative anomalies. However, it can be used to cancel global
anomalies because we can have $\delta_{\phi}^{p}\neq0$ for $\phi\in
\mathcal{D}_{M}^{+}$. Furthermore, we conclude that $\delta^{p}$ can be
considered as a group homomorphism $\delta^{p}\in\mathrm{Hom}(\mathcal{D}%
_{M}^{+}/\mathcal{D}_{M}^{0},\mathbb{R)}$.

\begin{remark}
\label{Remark IntegerCS}\emph{If the polynomial }$p$\emph{ determines an
integral characteristic class, or more generally, if }$p(M_{\phi})$\emph{ is
an integer for any }$\phi\in\mathcal{D}_{M}^{+}$, \emph{then we obtain as a
corollary of Theorem \ref{MappingTorus}\ that }$CS_{p,A_{0}}$\emph{ is
}$\mathcal{D}_{M}^{+}$\emph{-invariant} $\operatorname{mod}\mathbb{Z}$,
\emph{and hence} $\exp(2\pi i\cdot CS_{p,A_{0}})$ \emph{is }$\mathcal{D}%
_{M}^{+}$\emph{-invariant. }

\emph{Furthermore, if }$p$\emph{ determines an integral characteristic class,
the Chern-Simons theory can be used to give a definition the Chern-Simons
action independent of the background connection }$A_{0}$\emph{ (see
\cite{DW},\cite{flat}). If }$Z_{n}(M)$\emph{ is the space of }$n$\emph{-cycles
on }$M$\emph{ and }$\chi_{A}\colon Z_{n}(M)\rightarrow\mathbb{R}/\mathbb{Z}$
\emph{is the Chern-Simons differential character associated to a connection
}$A$\emph{ on }$FM$\emph{,\ we can define }$CS_{p}\colon\mathfrak{Met}%
M\rightarrow\mathbb{R}/\mathbb{Z}$ \emph{by} $\ CS_{p}(g)=\chi_{\omega^{g}%
}(M)$ \emph{and we have }$CS_{p}(g)=\chi_{A_{0}}(M)+CS_{p,A_{0}}%
(g)\operatorname{mod}\mathbb{Z}$. \emph{In this way we can define in a
canonical way the }$\mathcal{D}_{M}^{+}$\emph{-invariant exponentiated
Chern-Simons action }$\exp(2\pi i\cdot CS_{p})$\emph{. This fact is
interesting to study Chern-Simons theory as a Topological Quantum field
theory, but it cannot be used to cancel anomalies. }

\emph{However, for an arbitrary }$p\in I^{(n+1)/2}(O(n))$\emph{ the
function}$\ CS_{p,A_{0}}$\emph{ is not }$D_{M}^{+}$\emph{-invariant
}$\operatorname{mod}\mathbb{Z}$\emph{, and this fact can be used to cancel
gravitational anomalies if the variation of }$\exp(2\pi i\cdot CS_{p,A_{0}}%
)$\emph{\ cancels the variation of the partition function.}
\end{remark}

If the perturbative anomaly cancels then we have $\mathcal{Z}(\phi\cdot
g)=\mathcal{Z}(g)\exp(2\pi i\cdot\kappa_{\phi})$\ where $\kappa_{\phi}%
\in\mathrm{Hom}(\mathcal{D}_{M}^{+}/\mathcal{D}_{M}^{0},\mathbb{R}%
/\mathbb{Z)}$. We can\ cancel the anomaly with a Chern-Simons counterterm if
there exists $p\in I^{(n+1)/2}(O(n))$ such that $\kappa_{\phi}=\delta_{\phi
}^{p}$ for any $\phi\in\mathcal{D}_{M}^{+}$. Furthermore, we prove in Section
\ref{SectionLocAnomalyCanc}\ \ that this is the unique possible way to cancel
the anomaly. For Dirac operators $\kappa_{\phi}$ has been computed by Witten
(e.g. see \cite{WittenGlobAn},\cite{Witten2016},\cite{WittenFermionic}) and we
have $\kappa_{\phi}=\frac{1}{4}\eta_{D}(M_{\phi})$, where $\eta_{D}$ is the
reduced Atiyah-Patodi-Singer eta invariant of a differential operator on
$M_{\phi}$. If Witten formula applies, then the necessary and sufficient
condition for global anomaly cancellation is the existence of $p\in
I^{(d+1)/2}(O(n))$ such that $\frac{1}{4}\eta_{D}(M_{\phi})=p(M_{\phi
})\operatorname{mod}\mathbb{Z}$ for any $\phi\in\mathcal{D}_{M}^{+}$.

We recall that the eta invariant $\eta_{D}(N)$ is a spectral invariant of the
metric defined by $\eta_{D}=\dim\ker D+\underset{\varepsilon\rightarrow0^{+}%
}{\lim}\sum_{k}\mathrm{sign}(\lambda_{k})\exp(-\varepsilon\lambda_{k}^{2})$,
where $\lambda_{k}$ are the eigenvalues of $D$. It is know\ that in general
$\eta_{D}(N)$ cannot be obtained as the integral of a local form. However,
(e.g. see \cite{Stolz}) for certain operators $\eta_{D}(N)$ can be
obtained\ in terms of characteristic numbers. We study this condition and its
generalizations in more detail in Section \ref{SectionLocAnomalyCanc}.

Note that $\frac{1}{4}\eta_{D}(M_{\phi})$ is independent of the orientation on
$M$, but the sign of $p(M_{\phi})\ $changes if we change the orientation.
Hence, if $\frac{1}{4}\eta_{D}(M_{\phi})=p(M_{\phi})\operatorname{mod}%
\mathbb{Z}$ then we have $p(M_{\phi})=-p(M_{\phi})\operatorname{mod}%
\mathbb{Z}$ and hence $2p(M_{\phi})=0\operatorname{mod}\mathbb{Z}$. We
conclude that we can cancel the anomaly only if $\frac{1}{4}\eta_{D}(M_{\phi
})=0$ or $\frac{1}{2}\operatorname{mod}\mathbb{Z}$. This implies a possible
change of sign in the partition function $\mathcal{Z}(\phi\cdot g)=\pm
\mathcal{Z}(g)$. We recall (see \cite{WittenFermionic})\ that this sign
anomaly is the kind of anomaly that can appear for real or pseudoreal fermions.

\subsection{Orientation reversing diffeomorphisms\label{CSnoOrient}}

If $\phi$ is an orientation reversing diffeomorphisms, then the results of the
previous section can not be applied. Proposition \ref{Independence} does not
hold for $\phi$ and $M_{\phi}$ is unorientable, and hence we cannot compute
$p(M_{\phi})$. However, orientation reversing diffeomorphisms are important
symmetries and they should be consider in anomaly cancellation (e.g. see
\cite{WittenFermionic}). Furthermore, in \cite{WittenFermionic} and
\cite{WittenParity} it is analyzed also the case in which $M$ is unorientable.

One way to solve this problem is to consider the orientation as an independent
field. We define $\widetilde{\mathfrak{Met}}M=\{(g,\mathfrak{o}):$
$g\in\mathfrak{Met}M$ and $\mathfrak{o}$ is an orientation on $M\}$ and
$\mathcal{D}_{M}$ acts in a natural way on $\widetilde{\mathfrak{Met}}M$. We
extend the Chern-Simons action to $\widetilde{\mathfrak{Met}}M$ by setting
$CS_{p,A_{0}}(g,\mathfrak{o})=\int_{(M,\mathfrak{o})}Tp(\omega^{g},A_{0})$ and
then Propositions \ref{Independence} and \ref{Independence2} are valid for
arbitrary diffeomorphisms. Furthermore, for any $\phi\in\mathcal{D}_{M}$ we
have $\delta_{\phi}^{p}=\frac{1}{2}\delta_{\phi^{2}}^{p}=\frac{1}{2}%
p(M_{\phi^{2}})$ as $\phi^{2}\in\mathcal{D}_{M}^{+}$. If we combine this
result with Witten formula we conclude that the $\mathcal{D}_{M}$-anomaly
cancels if $\tfrac{1}{4}\eta_{D}(M_{\phi})=\frac{1}{2}p(M_{\phi^{2}%
})\operatorname{mod}\mathbb{Z}$ for any $\phi\in\mathcal{D}_{M}$.

Again the left side is independent of the orientation on $M$ and the right
side depends on it. Hence, also for orientation reversing diffeomorphisms we
can cancel only a sign anomaly.

\section{Equivariant cohomology in the Cartan model\label{SectionEquiCoho}}

We recall the definition of equivariant cohomology in the Cartan model
(\emph{e.g. }see \cite{BGV,GS}). Suppose that we have a left action of a
connected Lie group $\mathcal{G}$ on a manifold $N$. We denote by $\Omega
^{k}(N)^{\mathcal{G}}$the space of $\mathcal{G}$-invariant forms on $N$ and by
$H^{k}(N)^{\mathcal{G}}$ the $\mathcal{G}$-invariant cohomology of $N$. Let
$\Omega_{\mathcal{G}}^{\bullet}(N)=\mathcal{P}^{\bullet}(\mathrm{Lie\,}%
\mathcal{G},\Omega^{\bullet}(N))^{\mathcal{G}}$ be the space of $\mathcal{G}%
$-invariant polynomials on $\mathrm{Lie\,}\mathcal{G}$ with values in
$\Omega^{\bullet}(N)$ with the graduation $\deg(\alpha)=2k+r$ if $\alpha
\in\mathcal{P}^{k}(\mathrm{Lie\,}\mathcal{G},\Omega^{r}(N))$. Let
$D\colon\Omega_{\mathcal{G}}^{q}(N)\rightarrow\Omega_{\mathcal{G}}^{q+1}(N)$
be the Cartan differential, $(D\alpha)(X)=d(\alpha(X))-\iota_{X_{N}}\alpha
(X)$, $X\in\mathrm{Lie\,}\mathcal{G}$. On $\Omega_{\mathcal{G}}^{\bullet}(N)$
we have $D^{2}=0$, and the equivariant cohomology (in the Cartan model) of $N$
with respect to the action of $\mathcal{G}$ is defined as the cohomology of
this complex.

A $\mathcal{G}$-equivariant $1$-form $\alpha\in\Omega_{\mathcal{G}}%
^{1}(\mathcal{N})$ is just a $\mathcal{G}$-invariant $1$-form $\alpha\in
\Omega^{1}(\mathcal{N})^{\mathcal{G}}$. It is $D$-closed if and only if it is
$\mathcal{G}$-basic, i.e., if $d\alpha=0$ and $\iota_{X_{N}}\alpha=0$ for any
$X\in\mathrm{Lie\,}\mathcal{G}$.

If $\varpi\in\Omega_{\mathcal{G}}^{2}(N)$ is a $\mathcal{G}$-equivariant
$2$-form, then we have $\varpi=\omega+\mu$ where $\omega\in\Omega^{2}(N)$ is
$\mathcal{G}$-invariant and $\mu\in\mathrm{Hom}\left(  \mathrm{Lie\,}%
\mathcal{G},\Omega^{0}(N)\right)  ^{\mathcal{G}}$. In particular we have
$\mu([Y,X])=L_{Y_{N}}\left(  \mu(X)\right)  $ for any $X,Y\in\mathrm{Lie\,}%
\mathcal{G}$. The map $\alpha\mapsto\alpha(0)$ induces a map $H_{\mathcal{G}%
}^{k}(N)\rightarrow H^{k}(N)^{\mathcal{G}}$. We have the following

\begin{proposition}
\label{LieAlgCoho}If $H^{1}(\mathrm{Lie\,}\mathcal{G})=0$ then the map
$H_{\mathcal{G}}^{2}(N)\rightarrow H^{2}(N)^{\mathcal{G}}$ is injective.
\end{proposition}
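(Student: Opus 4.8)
The plan is to unwind the definition of the map and reduce everything to the single algebraic fact that $H^{1}(\mathfrak{g})=0$ forces $\mathfrak{g}:=\mathrm{Lie}\,\mathcal{G}$ to be perfect. Start with an equivariant $2$-cocycle $\varpi\in\Omega_{\mathcal{G}}^{2}(N)$, written $\varpi=\omega+\mu$ with $\omega\in\Omega^{2}(N)^{\mathcal{G}}$ and $\mu\in\mathrm{Hom}(\mathfrak{g},\Omega^{0}(N))^{\mathcal{G}}$, and suppose its image $[\omega]$ vanishes in $H^{2}(N)^{\mathcal{G}}$. By definition of the invariant cohomology this says $\omega=d\beta$ for some $\beta\in\Omega^{1}(N)^{\mathcal{G}}$. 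Since $\beta$ is $\mathcal{G}$-invariant, $D\beta\in\Omega_{\mathcal{G}}^{2}(N)$, and $(D\beta)(X)=d\beta-\iota_{X_{N}}\beta$, so the $\Omega^{2}(N)$-component of the cocycle $\varpi-D\beta$ is $\omega-d\beta=0$. Replacing $\varpi$ by the cohomologous cocycle $\varpi-D\beta$, I may therefore assume $\varpi=\mu$ with $\mu\in\mathrm{Hom}(\mathfrak{g},\Omega^{0}(N))^{\mathcal{G}}$, and the task becomes to show $\mu=0$.

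Next I would extract information from $D$-closedness. Since $\mu(X)$ is a $0$-form, $(D\mu)(X)=d(\mu(X))-\iota_{X_{N}}(\mu(X))=d(\mu(X))$, so $D\varpi=0$ gives $d(\mu(X))=0$ for every $X\in\mathfrak{g}$. Now combine this with the identity $\mu([Y,X])=L_{Y_{N}}(\mu(X))$ recorded in the excerpt (a consequence of the $\mathcal{G}$-invariance of $\varpi$) and with Cartan's formula $L_{Y_{N}}=\iota_{Y_{N}}d+d\iota_{Y_{N}}$ applied to the function $\mu(X)$: this yields $\mu([Y,X])=\iota_{Y_{N}}d(\mu(X))=0$. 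Hence $\mu$ vanishes on the derived algebra $[\mathfrak{g},\mathfrak{g}]$.

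Finally I would invoke the hypothesis. Computing Lie algebra cohomology with trivial real coefficients, a $1$-cochain $\xi\in\mathfrak{g}^{\ast}$ is a cocycle exactly when $\xi([X,Y])=0$ for all $X,Y$, and there are no $1$-coboundaries, so $H^{1}(\mathfrak{g})\cong(\mathfrak{g}/[\mathfrak{g},\mathfrak{g}])^{\ast}$. Thus $H^{1}(\mathfrak{g})=0$ forces $\mathfrak{g}=[\mathfrak{g},\mathfrak{g}]$, and combined with the previous paragraph this gives $\mu=0$. Therefore $\varpi=D\beta$ (back in the original normalization), so $[\varpi]=0$ in $H_{\mathcal{G}}^{2}(N)$, which is precisely the desired injectivity. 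I do not expect a genuine obstacle: the two points deserving a line of care are that the reduction step legitimately stays inside the equivariant complex (this is where invariance of $\beta$ enters) and that the identification $H^{1}(\mathfrak{g})\cong(\mathfrak{g}/[\mathfrak{g},\mathfrak{g}])^{\ast}$ needs no finite-dimensionality hypothesis; the real content is just the reformulation of the hypothesis as perfectness of $\mathfrak{g}$.
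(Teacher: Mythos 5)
Your proof is correct and follows essentially the same route as the paper's: subtract $D\beta$ to reduce the cocycle to its $\mathrm{Hom}(\mathfrak{g},\Omega^{0}(N))$-component, use $D$-closedness together with $\mathcal{G}$-invariance to show this component annihilates $[\mathfrak{g},\mathfrak{g}]$, and then read $H^{1}(\mathfrak{g})=0$ as perfectness of $\mathfrak{g}$. A minor bonus of your version is that by deriving $\mu([Y,X])=\iota_{Y_{N}}d(\mu(X))=0$ directly from Cartan's formula you sidestep the paper's intermediate claim that the residual cocycle takes values in $\mathbb{R}$, which implicitly uses connectedness of $N$.
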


\begin{proof}
Let $\varpi=\omega+\mu\in\Omega_{\mathcal{G}}^{2}(N)$ be a closed
$\mathcal{G}$-equivariant $2$-form, and assume that there exists $\beta
\in\Omega^{1}(N)^{\mathcal{G}}$ such that $d\beta=\omega$. Then $\tau
=\varpi-D\beta\in\Omega_{\mathcal{G}}^{2}(N)$ is $D$-closed and $\tau
\in\mathrm{Hom}\left(  \mathrm{Lie\,}\mathcal{G},\Omega^{0}(N)\right)
^{\mathcal{G}}$. But $D\tau=0$ implies $d(\tau(X))=0$ for any $X\in
\mathrm{Lie\,}\mathcal{G}$ and hence $\tau\in\mathrm{Hom}\left(
\mathrm{Lie\,}\mathcal{G},\mathbb{R}\right)  ^{\mathcal{G}}$. Furthermore, by
the $\mathcal{G}$-invariance of $\tau$ we have$\ \tau([Y,X])=0$ for any
$X,Y\in\mathrm{Lie\,}\mathcal{G}$. As $H^{1}(\mathrm{Lie\,}\mathcal{G})=0$ we
conclude that $\tau=0$, and hence $\varpi=D\beta$.
\end{proof}

If $\omega\in\Omega^{k}(N)^{\mathcal{G}}$ is closed, we say that $\varpi
\in\Omega_{\mathcal{G}}^{k}(N)$ is a $\mathcal{G}$-equivariant extension of
$\omega$ if $\varpi(0)=\omega$ and $D\omega=0$. In general there can be
obstructions to the existence of $\mathcal{G}$-equivariant extensions.
However, we recall that for a $\mathcal{G}$-invariant connection, the
equivariant characteristic classes provide canonical equivariant extensions of
the characteristic forms.

If $\pi\colon\mathcal{U}\rightarrow N$ is a principal $U(1)$ bundle and
$\Xi\in\Omega^{1}(\mathcal{U},i\mathbb{R)}$ is a connection then the curvature
form$\ \mathrm{curv}(\Xi)\in\Omega^{2}(N)$ is defined by the property
$\pi^{\ast}(\mathrm{curv}(\Xi))=\frac{i}{2\pi}d\Xi$. A a $\mathcal{G}%
$-equivariant $U(1)$-bundle is an $U(1)$-bundle $\mathcal{U}\rightarrow N$\ in
which $\mathcal{G}$ acts by $U(1)$-automorphism. A connection $\Xi\in
\Omega^{1}(\mathcal{U},i\mathbb{R)}$ on $\mathcal{U}$ is $\mathcal{G}%
$-invariant if $\phi_{\mathcal{U}}^{\ast}\Xi=\Xi$ for any $\phi\in\mathcal{G}%
$. If $\Xi$ is a $\mathcal{G}$-invariant connection then $\frac{i}{2\pi}%
D(\Xi)$ projects onto a closed $\mathcal{G}$-equivariant $2$-form
$\mathrm{curv}_{\mathcal{G}}(\Xi)\in\Omega_{\mathcal{G}}^{2}(N))$ called the
$\mathcal{G}$-equivariant curvature of $\Xi$. If $X\in\mathrm{Lie}\mathcal{G}$
then we have $\mathrm{curv}_{\mathcal{G}}(\Xi)(X)=\mathrm{curv}(\Xi)+\mu^{\Xi
}(X)$, where $\mu^{\Xi}(X)=-\frac{i}{2\pi}\Xi(X_{\mathcal{U}})$. We say that a
connection $\Xi$ is $\mathcal{G}$-flat if $\mathrm{curv}_{\mathcal{G}}(\Xi)=0$.

\section{Local forms on $\mathfrak{Met}M$\ and Jet
bundles\label{Seclocalforms}}

We recall that the classical variational calculus can be formulated in a
coordinate independent way in terms of the geometry of jet bundles (e.g. see
\cite{AndersonVB}, \cite{saunders}). A Lagrangian function is a function
$L(g)$\ that depends on the components of the metric $g_{ij}(x)$ and its
derivatives $g_{ij,I}(x)=\frac{\partial g_{ij}}{\partial x^{I}}(x)$. Hence, if
we define the jet bundle $J^{\infty}\mathcal{M}_{M}$ as the space with
coordinates $(x_{i},g_{ij},g_{ij,k},g_{ij,kr},\ldots)$, then the Lagrangian
function can be considered as a function $L\colon J^{\infty}\mathcal{M}%
_{M}\rightarrow\mathbb{R}$. More formally, we say that two metrics
$g,g^{\prime}\in\mathfrak{Met}M$ have the same same jet at $x\in M$ if in any
coordinate system we have $g_{ij}(x)=g_{ij}^{\prime}(x)$\ and $\frac{\partial
g_{ij}}{\partial x^{I}}(x)=\frac{\partial g_{ij}^{\prime}}{\partial x^{I}}(x)$
for any $i,j=1,\ldots,n$, and any symmetric multi-index $I$. For
$g\in\mathfrak{Met}M$ we denote by $j_{x}^{\infty}g$ the jet of $g$ at $x$,
that can be considered as a coordinate independent version of the Taylor
polynomial of $g$ at $x$. The space of all jets of metrics is denoted by
$J^{\infty}\mathcal{M}_{M}$. We denote the projection by $q_{\infty}\colon
J^{\infty}\mathcal{M}_{M}\rightarrow M$. A local chart $\phi\colon
U\rightarrow\mathbb{R}^{n}$ with $U\subset M$ induces a chart $\mathrm{pr}%
\phi\colon q_{\infty}^{-1}(U)\rightarrow J^{\infty}\mathcal{M}_{\mathbb{R}%
^{n}}$ by setting $\mathrm{pr}\phi(j_{x}^{\infty}g)=j_{\phi(x)}^{\infty}%
((\phi^{-1})^{\ast}g).$

A Lagrangian density is a form $\lambda\in\Omega^{n}(J^{\infty}\mathcal{M}%
_{M})$ of the type $\lambda=Ldx^{1}\wedge\ldots\wedge dx^{n}$ for $L\in
\Omega^{0}(J^{\infty}\mathcal{M}_{M})$. The action functional associated to
$\lambda$ is the function $\mathcal{A}_{\lambda}\in\Omega^{0}(\mathfrak{Met}%
M)$ defined by $\mathcal{A}_{\lambda}(g)=\int_{M}L(g)dx^{1}\wedge\ldots\wedge
dx^{n}=\int_{M}(j^{\infty}g)^{\ast}\lambda$. The variational calculus consists
in the study of the form $d\mathcal{A}^{\lambda}\in\Omega^{1}(\mathfrak{Met}%
M)\,\ $in terms of the Euler-Lagrange operator of $\lambda$. In the jet bundle
approach, $d\mathcal{A}^{\lambda}$\ is determined by the Euler-Lagrange form
$\mathcal{E}(\lambda)\in\Omega^{n+1}(J^{\infty}\mathcal{M}_{M})$ given in
local coordinates by $\mathcal{E}(\lambda)=\left(  \sum_{I}(-1)^{|I|}\frac
{d}{dx^{I}}\left(  \frac{\partial L}{\partial g_{ij,I}}\right)  \right)
dg_{ij}\wedge dx^{1}\wedge\ldots\wedge dx^{n}$, where $\frac{d}{dx^{k}}%
=\frac{\partial}{\partial x^{k}}+\sum_{I}g_{ij,I+k}\frac{\partial}{\partial
g_{ij,I}}$. As $\mathfrak{Met}M$ is an open convex set of the space
$\mathcal{S}^{2}M$ of bilinear symmetric\ tensor on $M$ we have $T_{g}%
\mathfrak{Met}M\simeq\mathcal{S}^{2}M$ for any $g\in\mathfrak{Met}M$. Then the
form $d\mathcal{A}^{\lambda}$ is given by $d\mathcal{A}_{g}^{\lambda}%
(h)=\int_{M}\left(  \sum_{I}(-1)^{|I|}\frac{d}{dx^{I}}(\frac{\partial
L}{\partial g_{ij,I}}\right)  h_{ij}dx^{1}\wedge\ldots\wedge dx^{n}=\int
_{M}(j^{\infty}g)^{\ast}(\iota_{H}\mathcal{E}(\lambda))$, where $H$ is the
vector field $H=h_{ij}\frac{\partial}{\partial g_{ij}}+\sum_{I}\frac{\partial
h_{ij}}{\partial x^{I}}\frac{\partial}{\partial g_{ij,I}}\in\mathfrak{X}%
(J^{\infty}\mathcal{M}_{M})$. In particular $d\mathcal{A}_{g}^{\lambda}=0$ if
and only if $\mathcal{E}(\lambda)(j_{x}^{\infty}g)=0$ for all $x\in M$, and
the last are just the usual Euler-Lagrange equations.

The correspondence $\lambda\in\Omega^{n}(J^{\infty}\mathcal{M}_{M}%
)\mapsto\mathcal{A}^{\lambda}\in\Omega^{0}(\mathfrak{Met}M)$, and
$\mathcal{E}(\lambda)\in\Omega^{n+1}(J^{\infty}\mathcal{M}_{M})\mapsto
d\mathcal{A}^{\lambda}\in\Omega^{1}(\mathfrak{Met}M)\ $is extended to forms of
arbitrary degree in \cite{equiconn}. Let $\mathrm{j}^{\infty}\colon
M\times\mathfrak{Met}M\rightarrow J^{\infty}\mathcal{M}_{M}$, $\mathrm{j}%
^{\infty}(x,g)=j_{x}^{\infty}g$ be the evaluation map. We define $\Im
\colon\Omega^{n+k}(J^{\infty}\mathcal{M}_{M})\longrightarrow\Omega
^{k}(\mathfrak{Met}M)$, by $\Im(\alpha)=\int_{M}\left(  \mathrm{j}^{\infty
}\right)  ^{\ast}\alpha$ for $\alpha\in\Omega^{n+k}(J^{\infty}\mathcal{M}%
_{M})$. If $\alpha\in\Omega^{n}(J^{\infty}\mathcal{M}_{M})$ we have
$\Im(\alpha)_{g}=\int_{M}\left(  j^{\infty}g\right)  ^{\ast}\alpha$, and if
$\alpha\in\Omega^{n+k}(J^{\infty}\mathcal{M}_{M})$ we have $\Im(\alpha
)_{g}(h_{1},\ldots,h_{k})=\int_{M}\left(  j^{\infty}g\right)  ^{\ast}%
(\iota_{H_{k}}\cdots\iota_{H_{1}}\alpha)$ for $h_{1},\ldots,h_{k}%
\in\mathcal{S}^{2}M$. We define the space of local $k$-forms on
$\mathfrak{Met}M$ by $\Omega_{\mathrm{loc}}^{k}(\mathfrak{Met}M)=\Im
(\Omega^{n+k}(J^{\infty}\mathcal{M}_{M}))\subset\Omega^{k}(\mathfrak{Met}M)$.
If $\alpha\in\Omega^{k}(J^{\infty}\mathcal{M}_{M})$ with $k<n$ we define
$\Im(\alpha)=0$.\ We have $\Im(d\alpha)=d\Im(\alpha)$, and hence if $\alpha
\in\Omega_{\mathrm{loc}}^{k}(\mathfrak{Met}M)$ then $d\alpha\in\Omega
_{\mathrm{loc}}^{k+1}(\mathfrak{Met}M)$. The local cohomology of
$\mathfrak{Met}M$, $H_{\mathrm{loc}}^{\bullet}(\mathfrak{Met}M)$, is the
cohomology of $(\Omega_{\mathrm{loc}}^{\bullet}(\mathfrak{Met}M),d)$.

The integration operator extends to a map on equivariant differential forms
(see \cite{equiconn}) $\Im\colon\Omega_{\mathcal{D}_{M}^{+}}^{n+k}(J^{\infty
}\mathcal{M}_{M})\rightarrow\Omega_{\mathcal{D}_{M}^{+}}^{k}(\mathfrak{Met}%
M)$, by setting $(\Im(\alpha))(X)=\Im(\alpha(X))$ for any $\alpha\in
\Omega_{\mathcal{D}_{M}^{+}}^{n+k}(J^{\infty}\mathcal{M}_{M})$ and
$X\in\mathfrak{X}(M)$. The map $\Im$ induces a homomorphism in equivariant
cohomology $\Im\colon H_{\mathcal{D}_{M}^{+}}^{n+k}(J^{\infty}\mathcal{M}%
_{M})\rightarrow H_{\mathcal{D}_{M}^{+}}^{k}(\mathfrak{Met}M)$. We define the
space of local $\mathcal{D}_{M}^{+}$-equivariant $q$-forms on $\mathfrak{Met}%
M$ by $\Omega_{\mathcal{D}_{M}^{+},\mathrm{loc}}^{q}(\mathfrak{Met}%
M)=\bigoplus_{2k+r=q}(\mathcal{P}^{k}(\mathfrak{X}(M),\Omega_{\mathrm{loc}%
}^{r}(\mathfrak{Met}M)))^{\mathcal{D}_{M}^{+}}\subset\Omega_{\mathcal{D}%
_{M}^{+}}^{q}(\mathfrak{Met}M)$, and the cohomology $H_{\mathcal{D}_{M}%
^{+},\mathrm{loc}}^{\bullet}(\mathfrak{Met}M)$\ of $(\Omega_{\mathcal{D}%
_{M}^{+},\mathrm{loc}}^{\bullet}(\mathfrak{Met}M),D)$ is called the local
$\mathcal{D}_{M}^{+}$-equivariant cohomology of $\mathfrak{Met}M$. Then $\Im$
induces a cochain map $\Im\colon\Omega_{\mathcal{D}_{M}^{+}}^{n+k}(J^{\infty
}\mathcal{M}_{M})\rightarrow\Omega_{\mathcal{D}_{M}^{+},\mathrm{loc}}%
^{k}(\mathfrak{Met}M)$.

\subsection{Local cohomology and the variational bicomplex}

On $J^{\infty}\mathcal{M}_{M}$ we have a natural bigrading $\Omega
^{k}(J^{\infty}\mathcal{M}_{M})=\bigoplus_{k=p+q}\Omega^{p,q}(J^{\infty
}\mathcal{M}_{M})$ into horizontal and contact (or vertical) degree. This
bigrading is the infinitesimal version of the bigrading corresponding to the
product structure on $M\times\mathfrak{Met}M$, and the map $\mathrm{j}%
^{\infty}\colon M\times\mathfrak{Met}M\rightarrow J^{\infty}\mathcal{M}_{M}$
preserves bidegree. If $\alpha\in\Omega^{k}(J^{\infty}\mathcal{M}_{M})$ we
denote by $\alpha_{p,q}\in\Omega^{p,q}(J^{\infty}\mathcal{M}_{M})$ its
$p$-horizontal and $q$-contact component. According to the preceding bigrading
we have a decomposition of the exterior differential $d=d_{H}+d_{V}$.

For $k>0$ we denote by $I\colon\Omega^{n,k}(J^{\infty}\mathcal{M}%
_{M})\rightarrow\Omega^{n,k}(J^{\infty}\mathcal{M}_{M})$ the interior Euler
operator. We recall (e.g. see \cite{AndersonVB}) that it satisfies the
following properties: $I^{2}=I$, $\mathrm{ker}I=d_{H}(\Omega^{n-1,k})$,
$Id_{V}=d_{V}I$. The image of the interior Euler operator $\mathcal{F}%
^{k}(J^{\infty}\mathcal{M}_{M})=I(\Omega^{n,k}(J^{\infty}\mathcal{M}_{M}))$ is
called the space of functional $k$-forms. The vertical differential $d_{V}$
induces a differential on the space of functional forms $\delta_{V}%
\colon\mathcal{F}^{k}(J^{\infty}\mathcal{M}_{M})\rightarrow\mathcal{F}%
^{k+1}(J^{\infty}\mathcal{M}_{M})$, $\delta_{V}\alpha=I(d_{V}\alpha)$. In this
context a lagrangian density is a form $\lambda\in\Omega^{n,0}(J^{\infty
}\mathcal{M}_{M})$ and $\mathcal{E}(\lambda)=I(d_{V}\lambda)\in\mathcal{F}%
^{1}(J^{\infty}\mathcal{M}_{M})$ is the Euler-Lagrange operator of $\lambda$.

The relationship between the variational bicomplex and the local forms is the
following (see \cite{localVB} ). If $\alpha\in\Omega^{n+k}(J^{\infty
}\mathcal{M}_{M})$, $k>0$, we have $\Im(\alpha)=\Im(\alpha_{n,k})=\Im
(I(\alpha_{n,k}))$. Furthermore, the restriction of $\Im$ to $\mathcal{F}%
^{k}(J^{\infty}\mathcal{M}_{M})$ induces isomorphisms $\mathcal{F}%
^{k}(J^{\infty}\mathcal{M}_{M})\simeq\Omega_{\mathrm{loc}}^{k}(\mathfrak{Met}%
M)$ for $k>0$. Hence, the local forms of degree $k>0$ can be studied\ in terms
of its\ canonical representative on the jet bundle. Unfortunately, for $k=0$
we do not have a similar operator, i.e. for a local functional $\Lambda
\in\Omega_{\mathrm{loc}}^{0}(\mathfrak{Met}M)$ we cannot select a canonical
Lagrangian density $\lambda$ such that $\Im(\lambda)=\Lambda$.

The isomorphism $\mathcal{F}^{k}(J^{\infty}\mathcal{M}_{M})\simeq
\Omega_{\mathrm{loc}}^{k}(\mathfrak{Met}M)$, combined with the classical
calculation of the cohomology of the variational bicomplex implies that the
map $\Im$ induces isomorphisms $H_{\mathrm{loc}}^{k}(\mathfrak{Met}M)\simeq
H^{k}(\mathcal{F}^{\bullet}(J^{\infty}\mathcal{M}_{M}))\simeq H^{n+k}%
(J^{\infty}\mathcal{M}_{M})=0$ for $k>0$ (see \cite{localVB} for details).

For any $\phi\in\mathcal{D}_{M}$ we define $\mathrm{pr}\phi\in\mathcal{D}%
_{J^{\infty}\mathcal{M}_{M}}\ $by setting $\mathrm{pr}\phi(j_{x}^{\infty
}g)=j_{\phi(x)}^{\infty}(\phi\cdot g)$ for any $g\in\mathfrak{Met}M$. The
action of $\mathcal{D}_{M}^{+}$ commutes with $\Im$ and hence we have
isomorphisms $\Omega_{\mathrm{loc}}^{k}(\mathfrak{Met}M)^{\mathcal{D}_{M}^{+}%
}\simeq\mathcal{F}^{k}(J^{\infty}\mathcal{M}_{M})^{\mathcal{D}_{M}^{+}}$ for
$k>0$ and $H_{\mathrm{loc}}^{k}(\mathfrak{Met}M)^{\mathcal{D}_{M}^{+}}\simeq
H^{k}(\mathcal{F}^{\bullet}(J^{\infty}\mathcal{M}_{M})^{\mathcal{D}_{M}^{+}})$
for $k>1$.

The space $\Omega^{k}(\mathfrak{Met}M)^{\mathcal{D}_{M}^{+}}$ can depend on
the global properties of the manifold $M$. A big difference of local
cohomology and ordinary cohomology is that the space of local forms
$\Omega_{\mathrm{loc}}^{k}(\mathfrak{Met}M)^{\mathcal{D}_{M}^{+}}$ only
depends on the dimension of $M$.

\begin{lemma}
\label{LemmaUniver}Let $\alpha\in\Omega^{q}(J^{\infty}\mathcal{M}%
_{\mathbb{R}^{n}})^{\mathcal{D}_{\mathbb{R}^{n}}^{+}}$. For any $j_{x}%
^{\infty}g\in J^{\infty}\mathcal{M}_{M}$ we choose an oriented local chart
$\phi\colon U\rightarrow\mathbb{R}^{n}$ with $x\in U$ and the induced chart
$\mathrm{pr}\phi\colon q_{\infty}^{-1}(U)\rightarrow J^{\infty}\mathcal{M}%
_{\mathbb{R}^{n}}$. The form $\alpha_{M}(j_{x}^{\infty}g)\in\Omega
_{j_{x}^{\infty}g}^{q}(J^{\infty}\mathcal{M}_{M})$ defined by $\alpha
_{M}=(\mathrm{pr}\phi)^{\ast}\alpha$ is independent of $\phi$ and $\alpha_{M}$
is $\mathcal{D}_{M}^{+}$-invariant . In this way we obtain a map that induces
isomorphisms\
\[
\Omega^{q}(J^{\infty}\mathcal{M}_{\mathbb{R}^{n}})^{\mathcal{D}_{\mathbb{R}%
^{n}}^{+}}\simeq\Omega^{q}(J^{\infty}\mathcal{M}_{M})^{\mathcal{D}_{M}^{+}%
}\simeq\Omega^{q}(J^{\infty}\mathcal{M}_{M})^{\mathcal{D}_{M}^{0}}.
\]

\end{lemma}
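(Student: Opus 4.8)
The plan is to construct the map $\alpha \mapsto \alpha_M$ explicitly by pulling back via prolonged charts, and then check the three things the statement asserts: well-definedness (chart-independence), $\mathcal{D}_M^+$-invariance of $\alpha_M$, and bijectivity. The key structural input is that the prolongation construction $\mathrm{pr}\,\phi$ for $\phi\colon U \to \mathbb{R}^n$ an oriented chart is functorial — composing charts corresponds to composing their prolongations — and that $J^\infty\mathcal{M}_M$ is a ``natural bundle'' in the sense that its fiber over $x$ is entirely determined by jets of coordinate changes, so any two oriented charts around $x$ differ by an element of $\mathcal{D}_{\mathbb{R}^n}^+$ (more precisely, by the germ of an orientation-preserving diffeomorphism of $\mathbb{R}^n$ fixing the relevant point, once we translate). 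This is exactly where the hypothesis $\alpha \in \Omega^q(J^\infty\mathcal{M}_{\mathbb{R}^n})^{\mathcal{D}_{\mathbb{R}^n}^+}$ gets used.

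First I would fix $j_x^\infty g \in J^\infty\mathcal{M}_M$ and two oriented charts $\phi\colon U \to \mathbb{R}^n$, $\psi\colon V \to \mathbb{R}^n$ with $x \in U \cap V$. On the overlap the transition $\psi \circ \phi^{-1}$ is an orientation-preserving diffeomorphism between open subsets of $\mathbb{R}^n$; composing with a translation so that it extends to (the germ at a point of) an element $\chi \in \mathcal{D}_{\mathbb{R}^n}^+$, functoriality of prolongation gives $\mathrm{pr}\,\psi = \mathrm{pr}\,\chi \circ \mathrm{pr}\,\phi$ near $j_x^\infty g$. Then $(\mathrm{pr}\,\psi)^\ast\alpha = (\mathrm{pr}\,\phi)^\ast(\mathrm{pr}\,\chi)^\ast\alpha = (\mathrm{pr}\,\phi)^\ast\alpha$ because $\alpha$ is $\mathcal{D}_{\mathbb{R}^n}^+$-invariant, i.e. $(\mathrm{pr}\,\chi)^\ast\alpha = \alpha$. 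This shows $\alpha_M$ is well-defined. For $\mathcal{D}_M^+$-invariance: given $\phi \in \mathcal{D}_M^+$, to evaluate $\alpha_M$ near $\mathrm{pr}\,\phi(j_x^\infty g) = j_{\phi(x)}^\infty(\phi\cdot g)$ I would use a chart $\psi$ around $\phi(x)$; then $\psi\circ\phi$ is a chart around $x$, and comparing $(\mathrm{pr}\,\psi)^\ast\alpha$ with $(\mathrm{pr}(\psi\circ\phi))^\ast\alpha$ via the functoriality identity $\mathrm{pr}(\psi\circ\phi) = \mathrm{pr}\,\psi \circ \mathrm{pr}\,\phi$ gives precisely $(\mathrm{pr}\,\phi)^\ast\alpha_M = \alpha_M$.

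For the isomorphism claims: the map $\alpha \mapsto \alpha_M$ is injective since on any chart domain $\alpha$ is recovered from $\alpha_M$ by pushing forward along $\mathrm{pr}\,\phi$ (which is a diffeomorphism onto its image), and taking $M$ together with a single global chart when $M = \mathbb{R}^n$ shows the construction is the identity there. Surjectivity onto $\Omega^q(J^\infty\mathcal{M}_M)^{\mathcal{D}_M^+}$: given a $\mathcal{D}_M^+$-invariant $\beta$ on $J^\infty\mathcal{M}_M$, its expression $(\mathrm{pr}\,\phi)_\ast\beta$ in any oriented chart lands in $\Omega^q(J^\infty\mathcal{M}_{\mathbb{R}^n})$, is invariant under the transition pseudogroup hence (by the same germ argument plus the fact that local orientation-preserving diffeomorphisms generate $\mathcal{D}_{\mathbb{R}^n}^+$ at the level of jets) extends to a genuinely $\mathcal{D}_{\mathbb{R}^n}^+$-invariant form $\alpha$ with $\alpha_M = \beta$. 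Finally the identification $\Omega^q(J^\infty\mathcal{M}_M)^{\mathcal{D}_M^+} \simeq \Omega^q(J^\infty\mathcal{M}_M)^{\mathcal{D}_M^0}$: the inclusion is obvious, and for the reverse one notes that a $\mathcal{D}_M^0$-invariant local form is in particular invariant under all compactly-supported diffeomorphisms isotopic to the identity, which act transitively enough on charts that the same chart-transport argument forces invariance under all of $\mathcal{D}_M^+$ — equivalently, $\mathcal{D}_M^0$ already contains all the locally-supported diffeomorphisms needed to run the overlap comparison.

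The main obstacle I expect is the precise bookkeeping in the germ/prolongation functoriality step — making rigorous the passage from ``$\psi\circ\phi^{-1}$ is a local orientation-preserving diffeomorphism defined only on an overlap'' to ``$(\mathrm{pr}\,\chi)^\ast\alpha = \alpha$ for an honest $\chi \in \mathcal{D}_{\mathbb{R}^n}^+$,'' since $\alpha$-invariance is hypothesized only for global diffeomorphisms of $\mathbb{R}^n$ while the transition maps are merely local. The resolution is that a form on $J^\infty\mathcal{M}_{\mathbb{R}^n}$ depends at a point only on a finite jet, so the value of $(\mathrm{pr}\,\chi)^\ast\alpha$ at $j_x^\infty g$ depends only on a finite jet of $\chi$ at $x$, and any such finite jet of an orientation-preserving local diffeomorphism is realized by some global $\chi \in \mathcal{D}_{\mathbb{R}^n}^+$; this is the one place the argument genuinely needs care, and it is also where the orientation hypothesis enters (to stay inside $\mathcal{D}^+$ rather than all of $\mathcal{D}$).
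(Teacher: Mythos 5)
Your proposal is correct and follows essentially the same route as the paper: the paper's (much terser) proof also rests on transporting forms through prolonged charts and on the fact that any germ of an orientation-preserving local diffeomorphism extends to a global diffeomorphism isotopic to the identity (cited there from Palais), which is exactly the extension step you isolate as the delicate point and resolve via finite jets. Your finite-jet formulation and the explicit treatment of the $\mathcal{D}_{M}^{0}$ versus $\mathcal{D}_{M}^{+}$ identification are just more detailed renderings of the same argument.
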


\begin{proof}
That $\alpha_{M}$ is well defined and it is $\mathcal{D}_{M}^{+}$-invariant
easily follows from the $\mathcal{D}_{\mathbb{R}^{n}}^{+}$-invariance of
$\alpha$. The maps $\Omega^{q}(J^{\infty}\mathcal{M}_{\mathbb{R}^{n}%
})^{\mathcal{D}_{\mathbb{R}^{n}}^{+}}\rightarrow\Omega^{q}(J^{\infty
}\mathcal{M}_{M})^{\mathcal{D}_{M}^{+}}$ and $\Omega^{q}(J^{\infty}%
\mathcal{M}_{\mathbb{R}^{n}})^{\mathcal{D}_{\mathbb{R}^{n}}^{+}}%
\rightarrow\Omega^{q}(J^{\infty}\mathcal{M}_{M})^{\mathcal{D}_{M}^{+}}$ are
clearly injective. That they are surjective follows from the fact that any
germ of orientation preserving\ diffeomorphism can be extended to a
diffeomorphism isotopic to the identity (e.g. see \cite{Palais}).
\end{proof}

If we apply the preceding result to the functional forms we obtain the following

\begin{corollary}
\label{Corollarylocal}We have the isomorphisms
\begin{equation}
\Omega_{\mathrm{loc}}^{k}(\mathfrak{Met}M)^{\mathcal{D}_{M}^{0}}\simeq
\Omega_{\mathrm{loc}}^{k}(\mathfrak{Met}M)^{\mathcal{D}_{M}^{+}}%
\simeq\mathfrak{F}^{k}(J^{\infty}\mathcal{M}_{\mathbb{R}^{n}})^{\mathcal{D}%
_{\mathbb{R}^{n}}^{+}} \label{Iso}%
\end{equation}
for $k>0$ and
\begin{equation}
H_{\mathrm{loc}}^{k}(\mathfrak{Met}M)^{\mathcal{D}_{M}^{0}}\simeq
H_{\mathrm{loc}}^{k}(\mathfrak{Met}M)^{\mathcal{D}_{M}^{+}}\simeq
H^{k}(\mathfrak{F}^{\bullet}(J^{\infty}\mathcal{M}_{\mathbb{R}^{n}%
})^{\mathcal{D}_{\mathbb{R}^{n}}^{+}}) \label{Iso2}%
\end{equation}
for $k>1.$
\end{corollary}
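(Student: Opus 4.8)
The plan is to combine Lemma \ref{LemmaUniver} with the identification of local forms with functional forms on the jet bundle and with the vanishing of the cohomology of the variational bicomplex, all of which are recalled earlier in the excerpt. First I would observe that Lemma \ref{LemmaUniver} gives, for every degree $q$, a natural isomorphism $\Omega^{q}(J^{\infty}\mathcal{M}_{\mathbb{R}^{n}})^{\mathcal{D}_{\mathbb{R}^{n}}^{+}}\simeq\Omega^{q}(J^{\infty}\mathcal{M}_{M})^{\mathcal{D}_{M}^{+}}\simeq\Omega^{q}(J^{\infty}\mathcal{M}_{M})^{\mathcal{D}_{M}^{0}}$. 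Since the interior Euler operator $I$ and the differentials $d_{H}$, $d_{V}$, $\delta_{V}$ are all given by intrinsic, diffeomorphism-natural formulas, these isomorphisms are compatible with $I$ and with $\delta_{V}$; hence they restrict to isomorphisms $\mathcal{F}^{k}(J^{\infty}\mathcal{M}_{\mathbb{R}^{n}})^{\mathcal{D}_{\mathbb{R}^{n}}^{+}}\simeq\mathcal{F}^{k}(J^{\infty}\mathcal{M}_{M})^{\mathcal{D}_{M}^{+}}\simeq\mathcal{F}^{k}(J^{\infty}\mathcal{M}_{M})^{\mathcal{D}_{M}^{0}}$ that commute with $\delta_{V}$, and they therefore induce isomorphisms on the cohomology $H^{k}(\mathcal{F}^{\bullet})$ in all degrees $k$.

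Next I would invoke the isomorphisms recalled just before the Lemma: the integration map $\Im$ gives $\Omega_{\mathrm{loc}}^{k}(\mathfrak{Met}M)^{\mathcal{D}_{M}^{+}}\simeq\mathcal{F}^{k}(J^{\infty}\mathcal{M}_{M})^{\mathcal{D}_{M}^{+}}$ for $k>0$ and $H_{\mathrm{loc}}^{k}(\mathfrak{Met}M)^{\mathcal{D}_{M}^{+}}\simeq H^{k}(\mathcal{F}^{\bullet}(J^{\infty}\mathcal{M}_{M})^{\mathcal{D}_{M}^{+}})$ for $k>1$. The same statement holds with $\mathcal{D}_{M}^{+}$ replaced by $\mathcal{D}_{M}^{0}$, since the construction of $\Im$ and of the variational bicomplex is equally natural for the $\mathcal{D}_{M}^{0}$-action (this is exactly why the excerpt states the $\mathcal{D}_{M}^{0}$ version alongside the $\mathcal{D}_{M}^{+}$ one). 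Stringing these identifications together with the jet-bundle isomorphisms of the previous paragraph yields \eqref{Iso} for $k>0$ and \eqref{Iso2} for $k>1$, which is exactly the claim.

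The one point requiring care — and the place I expect the main (though modest) obstacle — is checking that the isomorphism of Lemma \ref{LemmaUniver} is genuinely a cochain map for $\delta_{V}$ and that it preserves the decomposition $\Omega^{n+k}=\bigoplus\Omega^{p,q}$ used to define $\mathcal{F}^{k}$ and $\Im$. Concretely, one must verify that the pullback $(\mathrm{pr}\,\phi)^{\ast}$ along the prolonged chart preserves horizontal and contact degree and commutes with $d_{H}$, $d_{V}$ and hence with $I$; this follows because $\mathrm{pr}\,\phi$ is itself the prolongation of a diffeomorphism, so it respects the contact structure of the jet bundle, but it is worth stating explicitly. Everything else is bookkeeping: the degree restriction $k>1$ in \eqref{Iso2} comes from the degree shift in the chain of isomorphisms $H_{\mathrm{loc}}^{k}\simeq H^{k}(\mathcal{F}^{\bullet})\simeq H^{n+k}(J^{\infty}\mathcal{M}_{M})$ inherited from the cohomology of the variational bicomplex, exactly as in the $\mathcal{D}_{M}^{+}$ case recalled above, and no new input beyond Lemma \ref{LemmaUniver} is needed.
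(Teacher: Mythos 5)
Your proposal is correct and follows essentially the same route as the paper, which simply applies Lemma \ref{LemmaUniver} to the functional forms and composes with the previously recalled isomorphisms $\Omega_{\mathrm{loc}}^{k}(\mathfrak{Met}M)^{\mathcal{D}_{M}^{+}}\simeq\mathcal{F}^{k}(J^{\infty}\mathcal{M}_{M})^{\mathcal{D}_{M}^{+}}$ ($k>0$) and the induced cohomology isomorphisms ($k>1$); your explicit check that $(\mathrm{pr}\,\phi)^{\ast}$ respects the bigrading and commutes with $I$ and $\delta_{V}$ is exactly the naturality the paper leaves implicit. The only minor imprecision is your gloss on the restriction $k>1$: it arises because the identification $\Omega_{\mathrm{loc}}^{k}\simeq\mathcal{F}^{k}$ fails at $k=0$ (no canonical Lagrangian representative of a local functional), so one needs both degrees $k$ and $k-1$ to be positive, rather than from a degree shift in the bicomplex.
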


Hence, we can consider $\mathfrak{F}^{k}(J^{\infty}\mathcal{M}_{\mathbb{R}%
^{n}})^{\mathcal{D}_{\mathbb{R}^{n}}^{+}}$ as the space of universal local
$k\,$-forms on dimension $n$. We remark that this result is not valid for
$H_{\mathrm{loc}}^{1}(\mathfrak{Met}M)^{\mathcal{D}_{M}^{+}}$ (see Section
\ref{SectionVB-CS}).

\begin{remark}
If $\mathcal{G}$ is a subgroup such that $\mathcal{D}_{M}^{0}\subset
\mathcal{G}\subset\mathcal{D}_{M}^{+}$ we also have that $\Omega^{q}%
(J^{\infty}\mathcal{M}_{M})^{\mathcal{G}}\simeq\Omega^{q}(J^{\infty
}\mathcal{M}_{\mathbb{R}^{n}})^{\mathcal{D}_{\mathbb{R}^{n}}^{+}}$ and
$\Omega_{\mathrm{loc}}^{k}(\mathfrak{Met}M)^{\mathcal{G}}\simeq\mathfrak{F}%
^{k}(J^{\infty}\mathcal{M}_{\mathbb{R}^{n}})^{\mathcal{D}_{\mathbb{R}^{n}}%
^{+}}$. This applies for example to the subgroup of elements of $\mathcal{D}%
_{M}^{+}$ preserving a spin structure on $M$.
\end{remark}

\subsection{Pontryagin forms on the jet of metrics}

The pull-back bundle $\bar{q}_{\infty}\colon q_{\infty}^{\ast}FM\rightarrow
J^{\infty}\mathcal{M}_{M}$ is a $\mathcal{D}_{M}$-equivariant principal
$Gl(n,\mathbb{R})$-bundle. As the Levi-Civita connection $\omega^{g}$\ of a
metric $g\in\mathfrak{Met}M$\ depends only on the first derivatives of $g$, we
can define in a natural way a connection on $q_{\infty}^{\ast}FM$ by setting $%
\mbox{\boldmath$\omega$}%
(X)=\omega^{g}((\bar{q}_{\infty})_{\ast}(X))$ for $X\in T_{(j_{x}^{\infty
}g,u)}(q_{\infty}^{\ast}FM)$, $g\in\mathfrak{Met}M$ and $u\in(FM)_{x}$. It can
be seen (see \cite{natconn}) that $%
\mbox{\boldmath$\omega$}%
$ is $\mathcal{D}_{M}$-invariant. If $p\in I^{k}(O(n))\subset I^{k}%
(Gl(n,\mathbb{R}))$ and $%
\mbox{\boldmath$\Omega$}%
$ is the curvature form of $%
\mbox{\boldmath$\omega$}%
$, we define the form $p(%
\mbox{\boldmath$\Omega$}%
)\in\Omega^{2k}(J^{\infty}\mathcal{M}_{M})^{\mathcal{D}_{M}}$. In particular,
for the Pontryagin polynomial $p_{k}$ we have the Pontryagin form $p_{k}(%
\mbox{\boldmath$\Omega$}%
)\in\Omega^{4k}(J^{\infty}\mathcal{M}_{M})^{\mathcal{D}_{M}}$.

\begin{remark}
\emph{We can define the Pontryagin forms by using the connection} $%
\mbox{\boldmath$\omega$}%
$ \emph{because we consider }$I^{k}(O(n))\subset I^{k}(Gl(n,R))$\emph{. But in
this way we cannot define the Euler form. It is shown in \cite{natconn} that
the connection }$%
\mbox{\boldmath$\omega$}%
$\emph{\ can be modified to obtain a} $\mathcal{D}_{M}$\emph{-invariant
connection }$%
\mbox{\boldmath$\omega$}%
^{\prime}$\emph{\ on the principal }$O(n)$\emph{-bundle }$OM\rightarrow
J^{\infty}\mathcal{M}_{M}$ \emph{where }$OM=\left\{  (j_{x}^{\infty}%
g,u_{x})\in q_{\infty}^{\ast}FM\colon\text{ }u_{x}\,\text{\emph{is} }%
g_{x}\text{\emph{-orthonormal}}\right\}  $. \emph{By using the connection }$%
\mbox{\boldmath$\omega$}%
^{\prime}$\emph{ we can define the Pontryagin forms and the Euler form }%
$E\in\Omega^{n}(J^{\infty}\mathcal{M}_{M})^{\mathcal{D}_{M}}$\emph{. In our
study of anomaly cancellation we only\ consider forms of degree }$n+2$\emph{
and }$n+1$\emph{, and hence the Euler form does not appear. Furthermore, it
can be seen that the Pontryagin forms defined by using the connections }$%
\mbox{\boldmath$\omega$}%
$\emph{ and }$%
\mbox{\boldmath$\omega$}%
^{\prime}$\emph{ lead to equivalent results. Hence we use the connection }$%
\mbox{\boldmath$\omega$}%
$\emph{ that it is easier to define.}
\end{remark}

By applying the map $\Im$ to the forms $p(%
\mbox{\boldmath$\Omega$}%
)$ we obtain closed local forms $\sigma^{p}=\Im(p(%
\mbox{\boldmath$\Omega$}%
))\in\Omega_{\mathrm{loc}}^{2k-n}(\mathfrak{Met}M)^{\mathcal{D}_{M}^{+}}$ for
$2k\geq n$. We can consider another interpretation of the forms $\sigma^{p}%
$\ that is more familiar in the study of gravitational anomalies (e.g. see
\cite{AS}, \cite{Kel}). The evaluation map determines a map $\overline
{\mathrm{ev}}\colon$ $FM\times\mathfrak{Met}M\rightarrow q_{\infty}^{\ast}FM$.
The connection $%
\mbox{\boldmath$\omega$}%
^{\ast}=\overline{\mathrm{ev}}^{\ast}%
\mbox{\boldmath$\omega$}%
$ defines a connection on this bundle and we have $\sigma^{p}=\int_{M}p(%
\mbox{\boldmath$\Omega$}%
^{\ast})$. It is shown in \cite{localVB} that we have the following

\begin{proposition}
\label{injective}The map $I^{k}(O(n))\rightarrow H_{\mathrm{loc}}%
^{2k-n}(\mathfrak{Met}M)^{\mathcal{D}_{M}^{+}}$, $p\mapsto\lbrack\sigma^{p}]$
is injective for $n+1<2k\leq2n$.
\end{proposition}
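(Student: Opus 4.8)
The plan is to show that the only way a characteristic form $p(\mbox{\boldmath$\Omega$})$ can give a cohomologically trivial local form $\sigma^p$ is for $p$ itself to vanish. I would argue by contradiction: suppose $p \in I^k(O(n))$ is nonzero and $[\sigma^p] = 0$ in $H_{\mathrm{loc}}^{2k-n}(\mathfrak{Met}M)^{\mathcal{D}_M^+}$. By Corollary \ref{Corollarylocal}, in the range $2k-n > 1$ (equivalently $2k > n+1$, which is the hypothesis) the local invariant cohomology is computed on the universal jet bundle $J^\infty\mathcal{M}_{\mathbb{R}^n}$, and by Lemma \ref{LemmaUniver} the form $p(\mbox{\boldmath$\Omega$})$ corresponds to a $\mathcal{D}_{\mathbb{R}^n}^+$-invariant form there. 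So I may as well work universally: I need that $p \mapsto [p(\mbox{\boldmath$\Omega$})]$ is injective into $H^{2k}(J^\infty\mathcal{M}_{\mathbb{R}^n})$, or rather into the functional-form cohomology $H^{2k-n}(\mathfrak{F}^\bullet(J^\infty\mathcal{M}_{\mathbb{R}^n})^{\mathcal{D}_{\mathbb{R}^n}^+})$.

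The key step is to exhibit, for a given nonzero $p$, a test situation on which $\sigma^p$ does not vanish — this is where the condition $2k \le 2n$ enters. The natural choice is to take $M$ to be a closed oriented $2k$-dimensional manifold (or, since we want $\sigma^p$ evaluated on $\mathfrak{Met}N$ for $N$ of dimension $n$, a manifold of the form where $2k - n$ tangent vectors in $T_g\mathfrak{Met}N \simeq \mathcal{S}^2 N$ can be paired against the functional form). Concretely, using the description $\sigma^p = \int_M p(\mbox{\boldmath$\Omega$}^\ast)$ with $\mbox{\boldmath$\omega$}^\ast$ the connection on $FM \times \mathfrak{Met}M$, one evaluates $\sigma^p$ along a family of metrics parametrized by a $(2k-n)$-sphere or torus chosen so that the total space carries a bundle with nontrivial $p$-characteristic number. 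Since $I^\bullet(O(n))$ is generated by the Pontryagin polynomials and $p \ne 0$, there is a $2k$-dimensional closed oriented manifold $X$ realizing a nonzero value of $p$ on some rank-$n$ bundle; the constraint $2k \le 2n$ guarantees that such a bundle of rank $n$ exists (one cannot detect $p_j$ with $4j > 2n$ by a rank-$n$ bundle otherwise). Splitting $X$ appropriately as a family of $n$-manifolds over a $(2k-n)$-parameter base, one reads off $\int_{\text{base}} \Im(p(\mbox{\boldmath$\Omega$})) \ne 0$, contradicting $[\sigma^p]=0$ since the base is closed.

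I expect the main obstacle to be the bookkeeping that turns "$p$ nonzero as an invariant polynomial" into "$p$ is detected by a rank-$n$ bundle over a $2k$-manifold arising from a family of $n$-dimensional metrics," i.e. making precise the realization geometry and checking that the dimension bounds $n+1 < 2k \le 2n$ are exactly what is needed. In fact the cleanest route may be to cite \cite{localVB} for the universal computation and only sketch the realization: reduce to $\mathbb{R}^n$ by Lemma \ref{LemmaUniver} and Corollary \ref{Corollarylocal}, note that $H^\bullet(\mathfrak{F}^\bullet(J^\infty\mathcal{M}_{\mathbb{R}^n})^{\mathcal{D}_{\mathbb{R}^n}^+})$ contains the algebra generated by the classes $[\sigma^{p_j}]$ with no relations in the stated degree range, and invoke that the Pontryagin polynomials $p_j$ with $4j \le 2n$ are algebraically independent generators of $I^\bullet(O(n))$ in that range. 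The upper bound $2k \le 2n$ is precisely the threshold below which no polynomial relations among the $p_j$ of total degree $2k$ have yet appeared, so injectivity holds; above it, relations like those forced by $p_j = 0$ for $2j > n$ would break it.
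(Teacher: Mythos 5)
The paper does not actually prove this proposition in the text you were given: it is quoted from \cite{localVB}. Your reduction to the universal setting via Lemma \ref{LemmaUniver} and Corollary \ref{Corollarylocal} is consistent with that framework, but the detection mechanism you build the proof around --- integrating $\sigma^{p}$ over a closed $(2k-n)$-cycle of metrics chosen so that an associated rank-$n$ bundle has nonzero $p$-characteristic number --- cannot work. The space $\mathfrak{Met}M$ is a convex open subset of $\mathcal{S}^{2}M$, hence contractible, and $\sigma^{p}$ is exact as an ordinary differential form: the paper itself exhibits the primitive $\rho^{p}=\Im(Tp(\mbox{\boldmath$\omega$},\overline{A}_{0}))$, which is even \emph{local} (a transgression with respect to a background connection), just not $\mathcal{D}_{M}^{+}$-invariant. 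Hence $\int_{Z}\sigma^{p}=0$ for every closed cycle $Z$ in $\mathfrak{Met}M$, and no pairing with families of metrics over a closed parameter space can distinguish $[\sigma^{p}]$ from zero. Relatedly, a family parametrized by $B$ mapping into $\mathfrak{Met}M$ is a family of metrics on the \emph{fixed} manifold $M$, with total space $M\times B$ and vertical bundle pulled back from $M$; the nontrivial fibrations your realization argument requires correspond to maps into the quotient $\mathfrak{Met}M/\mathcal{D}_{M}^{+}$, not into $\mathfrak{Met}M$ itself. So the contradiction you aim for never materializes.

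What actually has to be shown is that $p(\mbox{\boldmath$\Omega$})$ admits no primitive that is simultaneously local and $\mathcal{D}^{+}$-invariant, i.e.\ a statement about the cohomology of the invariant variational bicomplex on $J^{\infty}\mathcal{M}_{\mathbb{R}^{n}}$. That is a Gel'fand--Fuks/Gilkey/Anderson-type classification of natural forms, in the same circle of ideas as Proposition \ref{anderson}, and it is precisely what \cite{localVB} supplies; your closing suggestion to ``cite \cite{localVB} for the universal computation'' coincides with what the paper does, but the statement you propose to insert in its place (``the algebra generated by the $[\sigma^{p_{j}}]$ has no relations in this range'') is the assertion to be proved, not an argument for it. Your reading of the degree bounds is also off: the lower bound $n+1<2k$ puts $\sigma^{p}$ in degree $2k-n\geq2$, which is exactly where the isomorphism (\ref{Iso2}) applies (the paper explicitly warns that it fails in degree $1$, where injectivity can indeed fail for particular $M$, since $\sigma^{p}=d\rho$ with $\rho$ local invariant only forces $p(M_{\phi})=0$ for all mapping tori of $M$); the upper bound $2k\leq2n$ is the range of validity of the classification of natural forms, not a threshold for relations among Pontryagin numbers of rank-$n$ bundles over $2k$-manifolds.
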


Let $A_{0}$ be a connection on $FM\rightarrow M$. As $%
\mbox{\boldmath$\omega$}%
$ and $\overline{A}_{0}=q_{\infty}^{\ast}A_{0}$ are connections on the same
bundle $q_{\infty}^{\ast}FM\rightarrow J^{\infty}\mathcal{M}_{M}$ we have $p(%
\mbox{\boldmath$\Omega$}%
)-q_{\infty}^{\ast}p(F_{0})=dTp(%
\mbox{\boldmath$\omega$}%
,\overline{A}_{0})$. But if $2k>n$ then $p(F_{0})=0$ by dimensional reasons
and hence we have $p(%
\mbox{\boldmath$\Omega$}%
)=dTp(%
\mbox{\boldmath$\omega$}%
,\overline{A}_{0})$ and $\sigma^{p}=d\rho^{p}$,\ where $\rho^{p}=\Im(Tp(%
\mbox{\boldmath$\omega$}%
,\overline{A}_{0}))\in\Omega_{\mathrm{loc}}^{2k-n-1}(\mathfrak{Met}M)$. Note
that the form $\rho^{p}$ is not $\mathcal{D}_{M}^{+}$-invariant as it depends
on the connection $A_{0}$. In particular, if $n=3\operatorname{mod}4$ and
$p\in I^{(n+1)/2}(O(n))$ then the function $\rho^{p}=\Im(Tp(%
\mbox{\boldmath$\omega$}%
,\overline{A}_{0}))\in\Omega_{\mathrm{loc}}^{0}(\mathfrak{Met}M)$ is given by
the Chern-Simons action
\[
\rho^{p}(g)=\int_{M}(j^{\infty}g)^{\ast}Tp(%
\mbox{\boldmath$\omega$}%
,\overline{A}_{0})=\int_{M}Tp(\omega^{g},A_{0})=CS_{p.A_{0}}(g)\text{.}%
\]
Furthermore we have $dCS_{p.A_{0}}=d\rho^{p}=\sigma^{p}\in\Omega
_{\mathrm{loc}}^{1}(\mathfrak{Met}M)^{\mathcal{D}_{M}^{+}}$, as commented in
Section \ref{SectionCS}. Moreover $\lambda_{p,A_{0}}=Tp(%
\mbox{\boldmath$\omega$}%
,\overline{A}_{0})_{n,0}\in\Omega^{n,0}(J^{\infty}\mathcal{M}_{M})$ is the
Lagrangian density corresponding to the Chern-Simons action.

\begin{remark}
\emph{The forms }$p(%
\mbox{\boldmath$\Omega$}%
)$\emph{ are invariant under arbitrary diffeomorphisms of }$M$\emph{. However,
the integration map }$\Im$\emph{ is defined by using an orientation on }%
$M$\emph{. Hence the forms }$\sigma^{p}$\emph{ are not invariant under
orientation reversing diffeomorphisms. As in Section \ref{SectionCS}\ we can
solve this problem by considering the space} $\widetilde{\mathfrak{Met}}M$.
\emph{The integration map defines a map} $\widetilde{\Im}\colon\Omega
^{n+k}(J^{\infty}\mathcal{M}_{M})\rightarrow\Omega^{k}(\widetilde
{\mathfrak{Met}}M)$ \emph{and this map is} $\mathcal{D}_{M}$%
\emph{-equivariant. We define }$\Omega_{\mathrm{loc}}^{k}(\widetilde
{\mathfrak{Met}}M)=\widetilde{\Im}(\Omega^{n+k}(J^{\infty}\mathcal{M}_{M}))$
\emph{and we have} $\widetilde{\sigma^{p}}=\widetilde{\Im}(p(%
\mbox{\boldmath$\Omega$}%
))\in\Omega_{\mathrm{loc}}^{k}(\widetilde{\mathfrak{Met}}M)^{\mathcal{D}_{M}}$.
\end{remark}

\subsection{Chern-Simons actions and generalized Cotton
tensors\label{SectionVB-CS}}

Let $M$ be an oriented compact manifold of dimension $n=3\operatorname{mod}4$
and $p\in I^{(n+1)/2}(O(n))$. Under the isomorphism $\Omega_{\mathrm{loc}}%
^{1}(\mathfrak{Met}M)^{\mathcal{D}_{M}^{+}}\simeq\mathcal{F}^{1}(J^{\infty
}\mathcal{M}_{M})^{\mathcal{D}_{M}^{+}}$ the form $\sigma^{p}$ corresponds to
the functional $1$-form $\mathcal{E}^{p}=I(p(%
\mbox{\boldmath$\Omega$}%
)_{n,1})\in\mathfrak{F}^{1}(J^{\infty}\mathcal{M}_{M})^{\mathcal{D}_{M}^{+}}$,
which is called the generalized Cotton Tensor (e.g. see \cite{MetricsAnderson}%
). We remark that $\mathcal{E}^{p}$ can also be obtained as the Euler-Lagrange
operator of the Chern-Simons Lagrangian $\lambda_{p,A_{0}}$. It is known that
$\mathcal{E}^{p}$ is not the Euler-Lagrange operator of a invariant
$\mathcal{D}_{M}^{+}$ Lagrangian density (see \cite{MetricsAnderson},
\cite{localVB}). Furthermore, we have the following result (see
\cite{MetricsAnderson})

\begin{proposition}
\label{anderson}a) If $n\neq3\operatorname{mod}4$ any $\delta_{V}$-closed
$\mathcal{D}_{\mathbb{R}^{n}}^{+}$-invariant functional $1$-form
$\mathcal{E}\in\mathfrak{F}^{1}(J^{\infty}\mathcal{M}_{\mathbb{R}^{n}%
})^{\mathcal{D}_{\mathbb{R}^{n}}^{+}}$\ is the Euler-Lagrange operator of a
$\mathcal{D}_{\mathbb{R}^{n}}^{+}$-invariant lagrangian density.

b) If $n=3\operatorname{mod}4$ then any $\delta_{V}$-closed $\mathcal{D}%
_{\mathbb{R}^{n}}^{+}$-invariant functional $1$-form $\mathcal{E}%
\in\mathfrak{F}^{1}(J^{\infty}\mathcal{M}_{\mathbb{R}^{n}})^{\mathcal{D}%
_{\mathbb{R}^{n}}^{+}}$\ is of the form $\mathcal{E}=\mathcal{E}%
^{p}+\mathcal{E}(\lambda)$ for $p\in I^{(n+1)/2}(O(n))$ and $\lambda\in
\Omega^{n,0}(J^{\infty}\mathcal{M}_{\mathbb{R}^{n}})^{\mathcal{D}%
_{\mathbb{R}^{n}}^{+}}\ $a $\mathcal{D}_{\mathbb{R}^{n}}^{+}$-invariant
lagrangian density.

c)The results of a) and b) are also valid if $\mathcal{D}_{\mathbb{R}^{n}}%
^{+}$ is replaced with $\mathcal{D}_{\mathbb{R}^{n}}$.
\end{proposition}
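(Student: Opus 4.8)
The plan is to reduce the statement to a computation in the free variational bicomplex on the jet bundle of metrics over $\mathbb{R}^n$, where the cohomology is known by classical results. First I would set up the relevant complex: a $\delta_V$-closed $\mathcal{D}_{\mathbb{R}^n}^+$-invariant functional $1$-form $\mathcal{E}\in\mathfrak{F}^1(J^\infty\mathcal{M}_{\mathbb{R}^n})^{\mathcal{D}_{\mathbb{R}^n}^+}$ represents a class in $H^1$ of the invariant functional complex $(\mathfrak{F}^\bullet(J^\infty\mathcal{M}_{\mathbb{R}^n})^{\mathcal{D}_{\mathbb{R}^n}^+},\delta_V)$. Via the isomorphism $\mathfrak{F}^k(J^\infty\mathcal{M}_{\mathbb{R}^n})^{\mathcal{D}_{\mathbb{R}^n}^+}\simeq\Omega_{\mathrm{loc}}^k(\mathfrak{Met}\,\mathbb{R}^n)^{\mathcal{D}_{\mathbb{R}^n}^+}$ (Corollary \ref{Corollarylocal}, or its $\mathbb{R}^n$-analogue), this is the same as computing $H^1_{\mathrm{loc}}(\mathfrak{Met}\,M)^{\mathcal{D}_M^+}$ for $M$ of dimension $n$. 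The point of part a) is that this $H^1$ vanishes when $n\neq 3\bmod 4$, and the point of b) is that when $n=3\bmod 4$ it is one-dimensional, generated by $[\sigma^p]$ — equivalently $[\mathcal{E}^p]$ — for $p$ spanning the one-dimensional space $I^{(n+1)/2}(O(n))$ that lives in top-plus-one degree.

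The key computational input is the cohomology of the invariant variational bicomplex for the bundle of metrics, which is due to Anderson–Pohjanpelto and collaborators (the reference \cite{MetricsAnderson}). Concretely, one uses that the (non-invariant) variational bicomplex over a contractible base is exact in positive horizontal/contact degrees, so that a $\delta_V$-closed functional form is $d_V$-exact at the level of $\Omega^{n,\bullet}$; the obstruction to choosing the primitive $\mathcal{D}_{\mathbb{R}^n}^+$-invariantly is measured by a Gelfand–Fuks type Lie algebra cohomology of formal vector fields acting on formal metric jets, and this cohomology is governed by characteristic classes — the Pontryagin forms. In degrees relevant here the only surviving class comes from $p(\mbox{\boldmath$\Omega$})$ with $p\in I^{(n+1)/2}(O(n))$, which is why $n=3\bmod 4$ (so that $2k=n+1$ for $k=(n+1)/2$) is precisely the dimension where a nontrivial class appears, and why in all other dimensions every $\delta_V$-closed invariant $1$-form is $\mathcal{E}(\lambda)$ for an invariant Lagrangian $\lambda$.

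So the steps in order would be: (1) identify $H^1$ of the invariant functional complex with an invariant local cohomology group, using the machinery of Section \ref{Seclocalforms}; (2) invoke the local exactness of the free variational bicomplex to reduce to a Lie-algebra / equivariant-cohomology obstruction; (3) quote the Anderson classification of the relevant cohomology, which in this range is generated by Pontryagin forms, yielding the dichotomy between $n=3\bmod 4$ and $n\neq 3\bmod 4$; (4) for part b), match the generator with $\mathcal{E}^p=I(p(\mbox{\boldmath$\Omega$})_{n,1})$ using the already-established identity $\sigma^p=d\rho^p$ together with Proposition \ref{injective} to see that distinct $p$ give distinct classes, so the decomposition $\mathcal{E}=\mathcal{E}^p+\mathcal{E}(\lambda)$ holds uniquely up to the exact term; (5) for part c), observe that the whole argument is insensitive to whether one allows orientation reversing diffeomorphisms, since the forms $p(\mbox{\boldmath$\Omega$})$ are $\mathcal{D}_M$-invariant (not merely $\mathcal{D}_M^+$-invariant) as noted in the Remark, and the obstruction computation is unchanged.

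The main obstacle is step (3): genuinely carrying out the invariant variational bicomplex computation for metrics is a substantial piece of work, and in a paper of this kind one does not reprove it but cites \cite{MetricsAnderson}. The honest "proof" here is therefore mostly bookkeeping — translating that known classification into the functional-form language used above and checking that the Chern-Simons/Cotton data $\mathcal{E}^p$ is exactly the generator — so the real content lies in the cited result rather than in anything one would derive from scratch; I would present the argument as a reduction-plus-citation, flagging clearly where \cite{MetricsAnderson} is doing the heavy lifting.
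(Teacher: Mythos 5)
Your approach matches the paper's: Proposition \ref{anderson} is not proved in the paper at all but is quoted directly from \cite{MetricsAnderson}, exactly as you propose, so the reduction-plus-citation structure is the correct one. Two small corrections to your sketch, neither of which affects the statement: $I^{(n+1)/2}(O(n))$ is one-dimensional only for $n=3$ (for $n=7$ it already contains both $p_{2}$ and $p_{1}^{2}$, so the relevant cohomology need not be one-dimensional), and your step (1) identifying $H^{1}$ of the invariant functional complex with $H_{\mathrm{loc}}^{1}(\mathfrak{Met}M)^{\mathcal{D}_{M}^{+}}$ runs in the wrong direction — the paper explicitly warns that the isomorphism (\ref{Iso2}) fails for $k=1$, and Theorem \ref{H1} is deduced \emph{from} Proposition \ref{anderson} rather than being an input to it.
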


By Lemma \ref{LemmaUniver} this result also holds for any compact manifold $M$
of dimension $n$. By using Corollary \ref{Corollarylocal} we obtain the following

\begin{theorem}
\label{H1}Let $M$ be a compact oriented manifold of dimension $n$.

a) If $n\neq3\operatorname{mod}4$ then we have $H_{\mathrm{loc}}%
^{1}(\mathfrak{Met}M)^{\mathcal{D}_{M}^{+}}=0$.

b) If $n\!=\!3\operatorname{mod}4$ then any closed $\mathcal{D}_{M}^{+}%
$-invariant local $1$-form $\sigma\!\in\!\Omega_{\mathrm{loc}}^{1}%
(\mathfrak{Met}M)^{\mathcal{D}_{M}^{+}}$\ is of the form $\sigma=\sigma
^{p}+d\rho\mathcal{\ }$for $p\in I^{(n+1)/2}(O(n))$ and $\rho\in
\Omega_{\mathrm{loc}}^{0}(\mathfrak{Met}M)^{\mathcal{D}_{M}^{+}}$.
\end{theorem}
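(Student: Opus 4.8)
The plan is to combine Proposition~\ref{anderson} (Anderson's classification of $\delta_V$-closed $\mathcal{D}_{\mathbb{R}^n}^{+}$-invariant functional $1$-forms) with the identification of local cohomology of $\mathfrak{Met}M$ in terms of the functional-form complex on $J^{\infty}\mathcal{M}_{\mathbb{R}^n}$ given by Corollary~\ref{Corollarylocal}. First I would take a closed $\mathcal{D}_{M}^{+}$-invariant local $1$-form $\sigma\in\Omega_{\mathrm{loc}}^{1}(\mathfrak{Met}M)^{\mathcal{D}_{M}^{+}}$. Under the isomorphism $\Omega_{\mathrm{loc}}^{1}(\mathfrak{Met}M)^{\mathcal{D}_{M}^{+}}\simeq\mathcal{F}^{1}(J^{\infty}\mathcal{M}_{M})^{\mathcal{D}_{M}^{+}}$ it corresponds to a functional $1$-form $\mathcal{E}$, and since $\Im(d\alpha)=d\Im(\alpha)$ and $\delta_V$ is the differential induced by $d_V$ on functional forms, the condition $d\sigma=0$ translates into $\delta_V\mathcal{E}=0$. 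By Lemma~\ref{LemmaUniver} applied to functional forms (equivalently, by the first isomorphism in~\eqref{Iso} of Corollary~\ref{Corollarylocal}), I may transport $\mathcal{E}$ to a $\delta_V$-closed $\mathcal{D}_{\mathbb{R}^n}^{+}$-invariant functional $1$-form on $J^{\infty}\mathcal{M}_{\mathbb{R}^n}$.

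Next I would apply Proposition~\ref{anderson}. In case (a), $n\neq 3\operatorname{mod}4$, any such $\mathcal{E}$ is the Euler-Lagrange operator $\mathcal{E}(\lambda)$ of a $\mathcal{D}_{\mathbb{R}^n}^{+}$-invariant Lagrangian density $\lambda$; transporting back to $M$ via Lemma~\ref{LemmaUniver} (which respects the bidegree decomposition and hence the operators $I$, $d_V$, $\mathcal{E}$), we get a $\mathcal{D}_{M}^{+}$-invariant $\lambda\in\Omega^{n,0}(J^{\infty}\mathcal{M}_M)^{\mathcal{D}_M^{+}}$ with $\mathcal{E}=\mathcal{E}(\lambda)=I(d_V\lambda)=\delta_V(I(\lambda))$. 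Applying $\Im$ and using $\Im\circ d_V$ corresponds to $d$ on the image, this says $\sigma=d\rho$ with $\rho=\Im(\lambda)\in\Omega_{\mathrm{loc}}^{0}(\mathfrak{Met}M)^{\mathcal{D}_M^{+}}$. Hence $H_{\mathrm{loc}}^{1}(\mathfrak{Met}M)^{\mathcal{D}_M^{+}}=0$. In case (b), $n=3\operatorname{mod}4$, Proposition~\ref{anderson} gives $\mathcal{E}=\mathcal{E}^{p}+\mathcal{E}(\lambda)$ for some $p\in I^{(n+1)/2}(O(n))$ and some invariant $\lambda$; the same transport and application of $\Im$ yields $\sigma=\sigma^{p}+d\rho$ with $\rho=\Im(\lambda)\in\Omega_{\mathrm{loc}}^{0}(\mathfrak{Met}M)^{\mathcal{D}_M^{+}}$, using that $\Im(p(\boldsymbol{\Omega})_{n,1})$ represents $\sigma^{p}$ and that $\mathcal{E}^{p}=I(p(\boldsymbol{\Omega})_{n,1})$ corresponds to $\sigma^{p}$ under the functional-form isomorphism, as recalled at the start of Section~\ref{SectionVB-CS}.

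The main obstacle, and the point that needs care, is the precise bookkeeping between the degree-zero local functionals and Lagrangian densities: the excerpt explicitly warns that for $k=0$ there is no canonical operator $\mathcal{F}^{0}\simeq\Omega_{\mathrm{loc}}^{0}$, so one cannot simply invoke an isomorphism to produce $\rho$. The resolution is to produce $\rho$ directly as $\Im(\lambda)$ from the Lagrangian density $\lambda$ furnished by Proposition~\ref{anderson}, and then verify $d\rho=\Im(d_V\lambda)$ has image under the interior Euler operator equal to the functional form $\mathcal{E}(\lambda)$ — i.e. that $\sigma - \sigma^{p}$ (resp. $\sigma$ in case (a)) and $d\rho$ agree as local $1$-forms because their associated functional $1$-forms $I((d_V\lambda)_{n,1})=\mathcal{E}(\lambda)$ coincide and $\Im$ is injective on functional $1$-forms. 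One should also check that the $\mathcal{D}_M^{+}$-invariance of $\lambda$ (hence of $\rho=\Im(\lambda)$) is preserved under the transport of Lemma~\ref{LemmaUniver}, which it is since that map intertwines the group actions. Everything else is a routine assembly of already-stated results.
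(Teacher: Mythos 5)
Your proposal is correct and follows essentially the same route as the paper, which derives Theorem \ref{H1} by transporting Proposition \ref{anderson} to $M$ via Lemma \ref{LemmaUniver} and translating through the isomorphism $\Omega_{\mathrm{loc}}^{1}(\mathfrak{Met}M)^{\mathcal{D}_{M}^{+}}\simeq\mathcal{F}^{1}(J^{\infty}\mathcal{M}_{M})^{\mathcal{D}_{M}^{+}}$ of Corollary \ref{Corollarylocal}. Your extra care with the degree-zero bookkeeping (producing $\rho=\Im(\lambda)$ directly rather than invoking a nonexistent isomorphism $\mathcal{F}^{0}\simeq\Omega_{\mathrm{loc}}^{0}$) is exactly the right way to fill in the detail the paper leaves implicit.
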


We conclude that $H_{\mathrm{loc}}^{1}(\mathfrak{Met}M)^{\mathcal{D}_{M}^{+}}$
is generated by the Pontryagin forms, as commented in the Introduction. The
Corollary \ref{CSD0} can be reobtained by using the properties of the forms
$\sigma^{p}$. We have the following

\begin{lemma}
\label{basic}If $p\in I^{(n+1)/2}(O(n))$ then$\ \sigma^{p}=\Im(p(%
\mbox{\boldmath$\Omega$}%
))\in\Omega_{\mathrm{loc}}^{1}(\mathfrak{Met}M)^{\mathcal{D}_{M}^{+}}$\ is
$\mathcal{D}_{M}^{+}$-basic, i.e. $d\sigma^{p}=0$ and $\iota
_{X_{\mathfrak{Met}M}}\sigma^{p}=0$\ for any $X\in\mathfrak{X}(M)$.
\end{lemma}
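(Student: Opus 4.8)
The plan is to verify the two conditions separately, both following from the fact that $\sigma^{p}=\Im(p(\boldsymbol{\Omega}))$ is the image under $\Im$ of the Pontryagin form $p(\boldsymbol{\Omega})\in\Omega^{n+1}(J^{\infty}\mathcal{M}_{M})^{\mathcal{D}_{M}}$, which is a \emph{closed} $\mathcal{D}_{M}$-invariant form on the jet bundle (it is closed by Chern-Weil theory, being a characteristic form of the curvature $\boldsymbol{\Omega}$ of the natural connection $\boldsymbol{\omega}$). First, closedness: since $\Im$ commutes with the exterior differential, $d\sigma^{p}=d\Im(p(\boldsymbol{\Omega}))=\Im(dp(\boldsymbol{\Omega}))=\Im(0)=0$. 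This is immediate and uses only the identity $\Im(d\alpha)=d\Im(\alpha)$ recalled in Section \ref{Seclocalforms}.

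The second and genuinely substantive condition is that $\iota_{X_{\mathfrak{Met}M}}\sigma^{p}=0$ for every $X\in\mathfrak{X}(M)$, i.e. that $\sigma^{p}$ is horizontal for the $\mathcal{D}_{M}^{+}$-action. The approach is to trace through the definition of $\Im$ on a $1$-form. For $\alpha\in\Omega^{n+1}(J^{\infty}\mathcal{M}_{M})$ we have, for $h\in\mathcal{S}^{2}M\simeq T_{g}\mathfrak{Met}M$, the formula $\Im(\alpha)_{g}(h)=\int_{M}(j^{\infty}g)^{\ast}(\iota_{H}\alpha)$, where $H$ is the prolonged vertical vector field on $J^{\infty}\mathcal{M}_{M}$ associated to $h$. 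When $h=X_{\mathfrak{Met}M}(g)=L_{X}g$ is the infinitesimal generator of the diffeomorphism action, the corresponding prolonged vector field on the jet bundle is precisely the vertical part of the prolongation $\mathrm{pr}(X)$ of $X$; since $p(\boldsymbol{\Omega})$ is $\mathcal{D}_{M}$-invariant, it is annihilated by the Lie derivative $L_{\mathrm{pr}(X)}=d\iota_{\mathrm{pr}(X)}+\iota_{\mathrm{pr}(X)}d$, and using $dp(\boldsymbol{\Omega})=0$ this gives $d(\iota_{\mathrm{pr}(X)}p(\boldsymbol{\Omega}))=0$. Splitting $\mathrm{pr}(X)$ into its horizontal piece (the total vector field lifting $X$ on $M$) and its vertical piece $H_{L_{X}g}$, and noting that contracting $p(\boldsymbol{\Omega})$ — a form of horizontal degree $n$, since after pulling back by $j^{\infty}g$ it lands in $\Omega^{n+1}(M)=0$ only in the contact directions survive — one shows $(j^{\infty}g)^{\ast}(\iota_{H_{L_{X}g}}p(\boldsymbol{\Omega}))$ is exact on $M$, so its integral vanishes.

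The cleaner way to package the last step, and the one I would actually write, is this: the $\mathcal{D}_{M}$-invariance of the connection $\boldsymbol{\omega}$ means $p(\boldsymbol{\Omega})$ admits a canonical $\mathcal{D}_{M}^{+}$-equivariant extension in the Cartan model (the equivariant Pontryagin form associated to the invariant connection $\boldsymbol{\omega}$, as recalled in Section \ref{SectionEquiCoho}), and this equivariant form is $D$-closed. Applying the equivariant integration map $\Im\colon\Omega_{\mathcal{D}_{M}^{+}}^{n+k}(J^{\infty}\mathcal{M}_{M})\to\Omega_{\mathcal{D}_{M}^{+},\mathrm{loc}}^{k}(\mathfrak{Met}M)$ produces a $D$-closed equivariant extension of $\sigma^{p}$ in $\Omega_{\mathcal{D}_{M}^{+},\mathrm{loc}}^{1}(\mathfrak{Met}M)$; but a $D$-closed equivariant $1$-form is exactly a $\mathcal{D}_{M}^{+}$-basic $1$-form (as noted in Section \ref{SectionEquiCoho}), which is the assertion. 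The main obstacle is bookkeeping: identifying the generator of the $\mathfrak{Met}M$-action with the vertical prolongation on the jet bundle, and confirming that the equivariant Pontryagin form built from $\boldsymbol{\omega}$ has vanishing moment map component after integration — equivalently, that $\iota_{X_{\mathfrak{Met}M}}\sigma^{p}=0$ rather than merely closed. I expect this to reduce, via the Cartan formula and $dp(\boldsymbol{\Omega})=0$, to the vanishing of an integral of a $d_{H}$-exact term over the closed manifold $M$, which is where orientation-preservation of the diffeomorphisms enters (so that $\int_{M}\phi^{\ast}=\int_{M}$), exactly as in Proposition \ref{Independence}.
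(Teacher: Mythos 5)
Your proposal is correct and its main argument (the one you say you "would actually write") is exactly the paper's proof: the $\mathcal{D}_{M}$-invariant connection on $q_{\infty}^{\ast}FM$ yields a canonical $D$-closed equivariant extension $\varpi^{p}$ of the Pontryagin form, and applying the equivariant integration map gives a $D$-closed equivariant $1$-form equal to $\sigma^{p}$ (the higher moment-map components die under $\Im$ for degree reasons, as you note), which is precisely the statement that $\sigma^{p}$ is basic. The paper cites the explicit formula for $\varpi^{p}$ from an earlier reference rather than carrying out the Cartan-formula bookkeeping you sketch in your first, more computational variant.
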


\begin{proof}
The result follows from the existence of equivariant Pontryagin forms. As the
connection $%
\mbox{\boldmath$\omega$}%
$ is $\mathcal{D}_{M}^{+}$-invariant, the form $p(%
\mbox{\boldmath$\omega$}%
)$ admits a canonical equivariant extension $\varpi^{p}$ (the explicit
expression of this extension can be found in \cite{WP}). The $D$-closed
equivariant $1$-form $\Im(\varpi^{p})\in\Omega_{\mathcal{D}_{M}^{+}%
,\mathrm{loc}}^{1}(\mathfrak{Met}M)\simeq\Omega_{\mathrm{loc}}^{1}%
(\mathfrak{Met}M)^{\mathcal{D}_{M}^{+}}$ is given simply by the $1$-form
$\sigma^{p}=\Im(p(%
\mbox{\boldmath$\Omega$}%
))$. Hence we have $d\sigma^{p}=0$ and $\iota_{X_{\mathfrak{Met}M}}\sigma
^{p}=0$ for $X\in\mathfrak{X}(M)$.
\end{proof}

This result implies Corollary \ref{CSD0}, as we have $L_{X_{\mathfrak{Met}M}%
}CS_{p,A_{0}}=\iota_{X_{\mathfrak{Met}M}}dCS_{p,A_{0}}=\iota
_{X_{\mathfrak{Met}M}}\sigma^{p}=0$ for $X\in\mathfrak{X}(M)$.

By using Corollary \ref{Corollarylocal}\ and Theorem \ref{H1} we obtain the following

\begin{corollary}
For any compact oriented manifold we have $H_{\mathrm{loc}}^{1}(\mathfrak{Met}%
M)^{\mathcal{D}_{M}^{0}}=0$.
\end{corollary}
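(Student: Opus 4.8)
The plan is to combine the two results just established — Corollary \ref{Corollarylocal} and Theorem \ref{H1} — exactly as the sentence preceding the statement advertises. The goal is to show $H_{\mathrm{loc}}^{1}(\mathfrak{Met}M)^{\mathcal{D}_{M}^{0}}=0$ for every compact oriented $M$.

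First I would recall that $H^1$ of the local complex is a degree-$1$ cohomology group, so it is \emph{not} directly covered by the isomorphisms \eqref{Iso2} of Corollary \ref{Corollarylocal}, which only assert $H_{\mathrm{loc}}^{k}(\mathfrak{Met}M)^{\mathcal{D}_{M}^{0}}\simeq H_{\mathrm{loc}}^{k}(\mathfrak{Met}M)^{\mathcal{D}_{M}^{+}}$ for $k>1$. So the argument must proceed one level down, at the level of forms rather than cohomology. The relevant input is the first isomorphism in \eqref{Iso}, namely $\Omega_{\mathrm{loc}}^{1}(\mathfrak{Met}M)^{\mathcal{D}_{M}^{0}}\simeq\Omega_{\mathrm{loc}}^{1}(\mathfrak{Met}M)^{\mathcal{D}_{M}^{+}}$, which \emph{does} hold for $k=1$. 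Under this identification, a closed $\mathcal{D}_{M}^{0}$-invariant local $1$-form is the same thing as a closed $\mathcal{D}_{M}^{+}$-invariant local $1$-form, because closedness is intrinsic to the form and is preserved by the isomorphism (the isomorphism is realized by the integration map $\Im$, which commutes with $d$).

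Next, given $\sigma\in\Omega_{\mathrm{loc}}^{1}(\mathfrak{Met}M)^{\mathcal{D}_{M}^{0}}$ with $d\sigma=0$, I transport it to a closed $\mathcal{D}_{M}^{+}$-invariant local $1$-form and invoke Theorem \ref{H1}. If $n\neq3\operatorname{mod}4$, part (a) gives $\sigma=d\rho$ for some $\rho\in\Omega_{\mathrm{loc}}^{0}(\mathfrak{Met}M)^{\mathcal{D}_{M}^{+}}\subset\Omega_{\mathrm{loc}}^{0}(\mathfrak{Met}M)^{\mathcal{D}_{M}^{0}}$, so the class vanishes. If $n=3\operatorname{mod}4$, part (b) gives $\sigma=\sigma^{p}+d\rho$; but by Lemma \ref{basic} the form $\sigma^{p}$ is $\mathcal{D}_{M}^{+}$-basic, in particular $\iota_{X_{\mathfrak{Met}M}}\sigma^{p}=0$ for all $X\in\mathfrak{X}(M)$, and — crucially — $\sigma^{p}$ is itself exact as a local form, since $\sigma^{p}=d\rho^{p}$ with $\rho^{p}=CS_{p,A_0}\in\Omega_{\mathrm{loc}}^{0}(\mathfrak{Met}M)$ the Chern-Simons action (as shown in the subsection on Pontryagin forms). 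Although $CS_{p,A_0}$ is not $\mathcal{D}_{M}^{+}$-invariant, Corollary \ref{CSD0} tells us $\delta_\phi^p=0$ for $\phi\in\mathcal{D}_{M}^{0}$, i.e. $CS_{p,A_0}$ \emph{is} $\mathcal{D}_{M}^{0}$-invariant, so $\rho^p+\rho\in\Omega_{\mathrm{loc}}^{0}(\mathfrak{Met}M)^{\mathcal{D}_{M}^{0}}$ and $\sigma=d(\rho^p+\rho)$ is exact in the $\mathcal{D}_{M}^{0}$-invariant local complex. Either way $[\sigma]=0$, hence $H_{\mathrm{loc}}^{1}(\mathfrak{Met}M)^{\mathcal{D}_{M}^{0}}=0$.

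The main subtlety to be careful about is precisely the $k=1$ versus $k>1$ distinction: one cannot simply quote the cohomology isomorphism \eqref{Iso2} and then cite $H_{\mathrm{loc}}^{1}(\mathfrak{Met}M)^{\mathcal{D}_{M}^{+}}$ from Theorem \ref{H1}(a), because \eqref{Iso2} is not claimed for $k=1$ (and indeed the remark after Corollary \ref{Corollarylocal} warns that it fails there). The honest route is the one above: use the $k=1$ isomorphism of \emph{forms} in \eqref{Iso}, not of cohomology, and then notice that the only obstruction class $\sigma^p$ appearing in the $\mathcal{D}_M^+$-analysis becomes a coboundary once we pass to the larger complex $\Omega_{\mathrm{loc}}^{\bullet}(\mathfrak{Met}M)^{\mathcal{D}_{M}^{0}}$, precisely because the Chern-Simons primitive, while not $\mathcal{D}_M^+$-invariant, is $\mathcal{D}_M^0$-invariant by Corollary \ref{CSD0}. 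I would write this out in a few lines.
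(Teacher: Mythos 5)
Your argument is correct and is exactly the route the paper intends: the paper leaves the proof implicit (``By using Corollary \ref{Corollarylocal} and Theorem \ref{H1} we obtain the following''), and your filling-in --- passing through the $k=1$ isomorphism of \emph{forms} in (\ref{Iso}) rather than the $k>1$ cohomology isomorphism (\ref{Iso2}), then absorbing $\sigma^{p}$ into a coboundary via its $\mathcal{D}_{M}^{0}$-invariant primitive $CS_{p,A_{0}}$ (Corollary \ref{CSD0}) --- is precisely how those two results combine. No gaps.
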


Hence we could have $H_{\mathrm{loc}}^{1}(\mathfrak{Met}M)^{\mathcal{D}%
_{M}^{0}}\neq H_{\mathrm{loc}}^{1}(\mathfrak{Met}M)^{\mathcal{D}_{M}^{+}}$ in
dimension $n\neq3\operatorname{mod}4$.

\section{Locality and anomaly cancellation\label{SectionLocAnomalyCanc}}

As commented in the introduction, in \cite{AnomaliesG} the necessary and
sufficient condition for anomaly cancellation can be expressed in terms of the
equivariant curvature and holonomy of the Bismut-Freed connection. Precisiely,
let $\pi\colon\mathcal{U}\rightarrow\mathfrak{Met}M$ be a $\mathcal{D}_{M}%
^{+}$-equivariant $U(1)$-bundle with a $\mathcal{D}_{M}^{+}$-invariant
connection $\Xi$. If $\phi\in\mathcal{D}_{M}^{+}$ we define $\mathcal{C}%
^{\phi}=\{\gamma\colon\lbrack0,1]\rightarrow\mathfrak{Met}M:\gamma
(1)=\phi\cdot\gamma(0)\}$. For any $\phi\in\mathcal{D}_{M}^{+}$ and $\gamma
\in\mathcal{C}^{\phi}$ we define the $\phi$-equivariant holonomy of $\Xi$
$\mathrm{hol}_{\phi}^{\Xi}(\gamma)\in\mathbb{R}/\mathbb{Z}$ by the condition
$\overline{\gamma}(1)=\phi_{\mathcal{U}}(\overline{\gamma}(0))\exp(2\pi
i\cdot\mathrm{hol}_{\phi}^{\Xi}(\gamma))$, where $\overline{\gamma}$ is any
$\Xi$-horizontal lift of $\gamma$. Then we have the following (see
\cite{AnomaliesG})

\begin{proposition}
\label{TopAno} Let $\mathcal{U}\rightarrow\mathfrak{Met}M$ be a $\mathcal{D}%
_{M}^{+}$-equivariant $U(1)$-bundle with a $\mathcal{D}_{M}^{+}$-invariant
connection $\Xi$.

p) $\mathcal{U}\rightarrow\mathfrak{Met}M$ admits a $\mathcal{D}_{M}^{0}%
$-equivariant section if and only if there exists a $\mathcal{D}_{M}^{0}%
$-invariant $1$-form $\beta\in\Omega^{1}(\mathfrak{Met}M)^{\mathcal{D}_{M}%
^{0}}$ such that $\mathrm{curv}_{\mathcal{D}_{M}^{0}}(\Xi)=D\beta$.

g) $\mathcal{U}\rightarrow\mathfrak{Met}M$ admits a $\mathcal{D}_{M}^{+}%
$-equivariant section if and only if there exists a $\mathcal{D}_{M}^{+}%
$-invariant $1$-form $\beta\in\Omega^{1}(\mathfrak{Met}M)^{\mathcal{D}_{M}%
^{+}}$ such that $\mathrm{hol}_{\phi}^{\Xi}(\gamma)=\int_{\gamma}%
\beta\operatorname{mod}\mathbb{Z}$ for any $\gamma\in\mathcal{C}^{\phi}$.
\end{proposition}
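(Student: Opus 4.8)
The plan is to prove both statements p) and g) by translating the existence of an equivariant section into a statement about the equivariant curvature and holonomy of $\Xi$, following the general framework of \cite{AnomaliesG}. I would phrase everything in terms of the Cartan model for $\mathcal{D}_M^0$-equivariant (resp. the holonomy obstruction for $\mathcal{D}_M^+$-equivariant) cohomology introduced in Section \ref{SectionEquiCoho}.

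\textbf{Proof of p).} For the infinitesimal part, recall that a $\mathcal{D}_M^0$-equivariant section $s$ of $\mathcal{U}$ is the same as a $\mathcal{D}_M^0$-invariant function $s\colon\mathfrak{Met}M\to U(1)$ (identifying the trivial bundle $\mathcal{U}=\mathfrak{Met}M\times U(1)$ via the section, or more intrinsically a section of the associated line bundle of unit norm). Given such an $s$, write $s=\exp(2\pi i\Lambda)$ locally; although $\Lambda$ need not be globally defined, the $i\mathbb{R}$-valued $1$-form $\beta:=-\tfrac{i}{2\pi}\,s^{\ast}\Xi$ (the pullback of the connection along the section) is a genuine global real $1$-form on $\mathfrak{Met}M$, it is $\mathcal{D}_M^0$-invariant because $\Xi$ and $s$ are, and by the definition of curvature $d\beta=\mathrm{curv}(\Xi)$; moreover, contracting with a fundamental vector field $X_{\mathfrak{Met}M}$ gives $\iota_{X_{\mathfrak{Met}M}}\beta=-\tfrac{i}{2\pi}\Xi((s_{\ast}X)_{\mathcal{U}})=\mu^{\Xi}(X)$ since $s$ is equivariant, so that in the Cartan model $D\beta=\mathrm{curv}(\Xi)+\mu^{\Xi}=\mathrm{curv}_{\mathcal{D}_M^0}(\Xi)$. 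Conversely, given a $\mathcal{D}_M^0$-invariant $\beta$ with $D\beta=\mathrm{curv}_{\mathcal{D}_M^0}(\Xi)$, the new connection $\Xi':=\Xi+\tfrac{2\pi}{i}\pi^\ast\beta$ (suitably interpreted) is $\mathcal{D}_M^0$-flat, and a $\mathcal{D}_M^0$-flat connection on a $U(1)$-bundle over the contractible space $\mathfrak{Met}M$ has trivial (equivariant) holonomy for loops coming from $\mathcal{D}_M^0$ — here one uses that $\mathcal{D}_M^0$ is connected, so the holonomy homomorphism $\mathrm{Hom}(\pi_1, U(1))$-type obstruction vanishes and a parallel, hence $\mathcal{D}_M^0$-equivariant, section exists.

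\textbf{Proof of g).} For the global statement, a $\mathcal{D}_M^+$-equivariant section exists iff the $\mathcal{D}_M^+$-bundle is trivial as an equivariant $U(1)$-bundle. Since the perturbative part is already handled by p), I would assume $\mathcal{U}$ admits a $\mathcal{D}_M^0$-invariant flat connection (otherwise neither side of g) can hold) and reduce to the discrete group $\mathcal{D}_M^+/\mathcal{D}_M^0$. The obstruction to extending a $\mathcal{D}_M^0$-equivariant section to a $\mathcal{D}_M^+$-equivariant one is precisely measured by the equivariant holonomies $\mathrm{hol}_\phi^\Xi(\gamma)$ along paths in $\mathcal{C}^\phi$: one checks these are well defined modulo the ordinary holonomy (which is trivial by contractibility of $\mathfrak{Met}M$), define a class in $H^1$ of the appropriate quotient, and that this class vanishes exactly when there is a $\mathcal{D}_M^+$-invariant $\beta$ with $\mathrm{hol}_\phi^\Xi(\gamma)=\int_\gamma\beta\bmod\mathbb{Z}$. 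Concretely, if a $\mathcal{D}_M^+$-equivariant section $s$ exists, take $\beta=-\tfrac{i}{2\pi}s^\ast\Xi$ as before; equivariance under all of $\mathcal{D}_M^+$ forces $\beta$ to be $\mathcal{D}_M^+$-invariant, and for $\gamma\in\mathcal{C}^\phi$ the horizontal lift compared against $s\circ\gamma$ computes $\mathrm{hol}_\phi^\Xi(\gamma)$ as the integral $\int_\gamma\beta$ modulo $\mathbb{Z}$. Conversely, given such a $\beta$, modify $\Xi$ by $\beta$ to kill both the equivariant curvature and all $\phi$-equivariant holonomies, and then parallel transport produces a globally defined $\mathcal{D}_M^+$-equivariant section.

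\textbf{Main obstacle.} The routine parts are the curvature bookkeeping and the identification $D\beta=\mathrm{curv}_{\mathcal{D}_M^0}(\Xi)$; the delicate point is the converse direction in g), namely showing that once the $\phi$-equivariant holonomies are matched by a $\mathcal{D}_M^+$-invariant $1$-form $\beta$, the resulting flat equivariant connection genuinely admits a \emph{global} equivariant section rather than merely a local one. This requires using the contractibility of $\mathfrak{Met}M$ to trivialize the bundle and the flatness to show the candidate section is single-valued, together with a careful check that the cocycle $\alpha_\phi$ built from the holonomies is a coboundary in $\mathbb{R}/\mathbb{Z}$. Since this is exactly the content of the cited results in \cite{AnomaliesG}, I would expect the proof in the paper to invoke those directly rather than reprove them, and the present statement to follow by specializing that general machinery to $\mathcal{U}\to\mathfrak{Met}M$ with the Bismut--Freed connection.
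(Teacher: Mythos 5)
The paper gives no proof of Proposition \ref{TopAno}: it is quoted directly from \cite{AnomaliesG}, exactly as you anticipate in your closing paragraph, so there is nothing internal to the paper to compare your argument against. Your sketch (pulling back $\Xi$ along the section to produce $\beta$ for the forward implications, and using equivariant flatness together with contractibility of $\mathfrak{Met}M$ to integrate a parallel equivariant section for the converses) is the standard route and is sound, up to a sign: with the paper's conventions $\pi^{\ast}\mathrm{curv}(\Xi)=\tfrac{i}{2\pi}d\Xi$ and $\mu^{\Xi}(X)=-\tfrac{i}{2\pi}\Xi(X_{\mathcal{U}})$, the Cartan differential $D\beta=d\beta-\iota_{X}\beta$ matches $\mathrm{curv}_{\mathcal{D}_{M}^{0}}(\Xi)$ only if you take $\beta=+\tfrac{i}{2\pi}s^{\ast}\Xi$ rather than $-\tfrac{i}{2\pi}s^{\ast}\Xi$.
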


When the preceding result is applied to the anomaly bundle, it provides
necesary and sufficient conditions for topological anomaly cancellation, i.e,
to the posibility of solving condition (\ref{Lambda}) with a term $\Lambda
\in\Omega^{0}(\mathfrak{Met}M)$. However, physical anomaly cancellation
requires that $\Lambda$ should be a local functional $\Omega_{\mathrm{loc}%
}^{0}(\mathfrak{Met}M)$. To generalize the Proposition \ref{TopAno} for
physical anomalies we need analogous local\ versions of all the objects
appearing on it. This has been done in \cite{anomalies} for perturbative
anomalies and in \cite{AnomaliesG} for global anomalies. We say that a
$\mathcal{D}_{M}^{+}$-invariant connection $\Xi$ on $\mathcal{U}%
\rightarrow\mathfrak{Met}M$ is local if $\mathrm{curv}_{\mathcal{D}_{M}^{+}%
}(\Xi)\in\Omega_{\mathcal{D}_{M}^{+},\mathrm{loc}}^{2}(\mathfrak{Met}M)$. Note
that when $\Xi$ is the Bismut-Freed connection on a determinant bundle, then
$\Xi$ is a local connection by the equivariant\ Atiyah-Singer theorem for
families (see \cite{FreedEqui}). Finally it is needed to determine
intrinsically what kind of sections of $\mathcal{U}\rightarrow\mathfrak{Met}M$
correspond to the solutions of equation (\ref{Lambda}) with $\Lambda$ a local
functional. It is shown in \cite{AnomaliesG} that this condition is that a
section $S\colon\mathfrak{Met}M\rightarrow\mathcal{U}$ is $\Xi$-local if
$S^{\ast}(\Xi)\in\Omega_{\mathrm{loc}}^{1}(\mathfrak{Met}M)$. With this
definitions we have the following result

\begin{proposition}
\label{PysAno} Let $\mathcal{U}\rightarrow\mathfrak{Met}M$ be a $\mathcal{D}%
_{M}^{+}$-equivariant $U(1)$-bundle with a local $\mathcal{D}_{M}^{+}%
$-invariant connection $\Xi$.

P) $\mathcal{U}\rightarrow\mathfrak{Met}M$ admits a $\mathcal{D}_{M}^{0}%
$-equivariant $\Xi$-local section if and only if there exists a $\mathcal{D}%
_{M}^{0}$-invariant local $1$-form $\beta\in\Omega_{\mathrm{loc}}%
^{1}(\mathfrak{Met}M)^{\mathcal{D}_{M}^{0}}$ such that $\mathrm{curv}%
_{\mathcal{D}_{M}^{0}}(\Xi)=D\beta$.

G) $\mathcal{U}\rightarrow\mathfrak{Met}M$ admits a $\mathcal{D}_{M}^{+}%
$-equivariant $\Xi$-local section if and only if there exists a $\mathcal{D}%
_{M}^{+}$-invariant local $1$-form $\beta\in\Omega_{\mathrm{loc}}%
^{1}(\mathfrak{Met}M)^{\mathcal{D}_{M}^{+}}$ such that $\mathrm{hol}_{\phi
}^{\Xi}(\gamma)=\int_{\gamma}\beta\operatorname{mod}\mathbb{Z}$ for any
$\gamma\in\mathcal{C}^{\phi}$.
\end{proposition}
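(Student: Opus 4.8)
The plan is to follow, step by step, the proof of Proposition \ref{TopAno} given in \cite{AnomaliesG} (and, for the perturbative part, \cite{anomalies}), keeping track at each stage of whether the forms and sections involved are local. The only genuinely new ingredient is the bookkeeping identity relating $S^{\ast}\Xi$ to the candidate $1$-form $\beta$; everything of topological substance — the passage from a flat (resp.\ equivariantly flat) invariant connection on a bundle over the contractible space $\mathfrak{Met}M$ to an honest equivariant parallel section — is already contained in those references.

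$(\Rightarrow)$ Suppose $S\colon\mathfrak{Met}M\rightarrow\mathcal{U}$ is a $\mathcal{D}_{M}^{0}$- (resp.\ $\mathcal{D}_{M}^{+}$-) equivariant $\Xi$-local section, and set $\beta=\tfrac{i}{2\pi}S^{\ast}\Xi$. By the very definition of $\Xi$-local section, $\beta\in\Omega_{\mathrm{loc}}^{1}(\mathfrak{Met}M)$, and $\beta$ is $\mathcal{D}_{M}^{0}$- (resp.\ $\mathcal{D}_{M}^{+}$-) invariant because $S$ is equivariant and $\Xi$ is invariant. For part P: from $\pi^{\ast}(\mathrm{curv}(\Xi))=\tfrac{i}{2\pi}d\Xi$ one gets $d\beta=\mathrm{curv}(\Xi)$, and since equivariance of $S$ forces $S_{\ast}(X_{\mathfrak{Met}M})=X_{\mathcal{U}}$ for every $X\in\mathfrak{X}(M)$, one gets $\iota_{X_{\mathfrak{Met}M}}\beta=\tfrac{i}{2\pi}\Xi(X_{\mathcal{U}})=-\mu^{\Xi}(X)$; hence $D\beta=\mathrm{curv}_{\mathcal{D}_{M}^{0}}(\Xi)$. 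For part G: given $\phi\in\mathcal{D}_{M}^{+}$ and $\gamma\in\mathcal{C}^{\phi}$, write the $\Xi$-horizontal lift as $\overline{\gamma}(t)=S(\gamma(t))u(t)$ with $u(0)=1$; horizontality gives $u^{-1}\dot u=-(S^{\ast}\Xi)(\dot\gamma)$, so $u(1)=\exp(-2\pi i\int_{\gamma}\beta)$, while equivariance of $S$ together with $\gamma(1)=\phi\cdot\gamma(0)$ gives $\overline{\gamma}(1)=\phi_{\mathcal{U}}(\overline{\gamma}(0))u(1)$; comparing with the defining relation of $\mathrm{hol}_{\phi}^{\Xi}$ yields $\mathrm{hol}_{\phi}^{\Xi}(\gamma)=\int_{\gamma}\beta\operatorname{mod}\mathbb{Z}$, after the appropriate normalisation of the sign of $\Xi$.

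$(\Leftarrow)$ Suppose $\beta\in\Omega_{\mathrm{loc}}^{1}(\mathfrak{Met}M)$ is $\mathcal{D}_{M}^{0}$- (resp.\ $\mathcal{D}_{M}^{+}$-) invariant and satisfies the stated condition. Form $\Xi^{\beta}=\Xi+2\pi i\,\pi^{\ast}\beta$; it is again $\mathcal{D}_{M}^{0}$- (resp.\ $\mathcal{D}_{M}^{+}$-) invariant and local, since $\mathrm{curv}_{\mathcal{D}_{M}^{0}}(\Xi^{\beta})=\mathrm{curv}_{\mathcal{D}_{M}^{0}}(\Xi)-D\beta$. For part P this equivariant curvature vanishes, so $\Xi^{\beta}$ is $\mathcal{D}_{M}^{0}$-flat; for part G, contractibility of $\mathfrak{Met}M$ together with the holonomy hypothesis applied to contractible loops forces $d\beta=\mathrm{curv}(\Xi)$, hence $\Xi^{\beta}$ is flat, and moreover $\mathrm{hol}_{\phi}^{\Xi^{\beta}}(\gamma)=\mathrm{hol}_{\phi}^{\Xi}(\gamma)-\int_{\gamma}\beta=0\operatorname{mod}\mathbb{Z}$ for all $\gamma\in\mathcal{C}^{\phi}$. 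In either case Proposition \ref{TopAno}, applied to $\Xi^{\beta}$, produces a $\mathcal{D}_{M}^{0}$- (resp.\ $\mathcal{D}_{M}^{+}$-) equivariant section $S$, which the construction (parallel transport over the contractible base) takes to be $\Xi^{\beta}$-parallel. Then $S^{\ast}\Xi^{\beta}=0$ gives $S^{\ast}\Xi=-2\pi i\,\beta\in\Omega_{\mathrm{loc}}^{1}(\mathfrak{Met}M)$, so $S$ is $\Xi$-local, as required.

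The part I would simply cite rather than reprove is the existence of the equivariant parallel section from a flat (resp.\ equivariantly flat, resp.\ flat with trivial equivariant holonomy) invariant connection; this is where contractibility of $\mathfrak{Met}M$ and the triviality of characters of $\mathcal{D}_{M}^{0}$ enter, and it is exactly the content transplanted from \cite{anomalies} and \cite{AnomaliesG}. By contrast the locality part is immediate once one has the explicit identity $S^{\ast}\Xi=-2\pi i\,\beta$, which makes ``$S$ is $\Xi$-local'' and ``$\beta$ is local'' literally equivalent; the one place in part G that needs a little care is upgrading the holonomy identity, which a priori holds only $\operatorname{mod}\mathbb{Z}$, to the exact equality $d\beta=\mathrm{curv}(\Xi)$, which uses that every loop in $\mathfrak{Met}M$ bounds a disk.
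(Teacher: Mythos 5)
Your proof is correct, but note that the paper itself offers no proof of Proposition \ref{PysAno}: it is stated as imported from \cite{anomalies} and \cite{AnomaliesG}, so there is nothing in-text to compare against line by line. Your reconstruction is the natural argument and does exactly what those references are credited with: the identity $S^{\ast}\Xi=-2\pi i\,\beta$ makes ``$S$ is $\Xi$-local'' literally equivalent to ``$\beta$ is local'', and the twist $\Xi\mapsto\Xi+2\pi i\,\pi^{\ast}\beta$ reduces both P) and G) to the topological statement of Proposition \ref{TopAno} (flatness of $\mathrm{curv}_{\mathcal{D}_{M}^{0}}$, resp.\ vanishing equivariant holonomy, over the convex base $\mathfrak{Met}M$). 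The only loose ends are the sign conventions, which you flag and which are harmless, and the step recovering $d\beta=\mathrm{curv}(\Xi)$ from the holonomy hypothesis at $\phi=\mathrm{id}$, which you justify correctly via contractible loops.
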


When this result is aplied to the anomaly bundle it provides necessary and
sufficient conditions for physical anomaly cancellation.

For gravitational anomalies the condition P) for physical perturbative anomaly
cancellation can be simplified. It is a classical result of Gel'fand and Fuks
(see \cite{GF}) that the Lie algebra cohomology $H^{k}(\mathfrak{X}(M))=0$\ is
trivial for $k\leq n$. In particular we have $H^{1}(\mathfrak{X}(M))=0$ and by
Proposition \ref{LieAlgCoho} the map $H_{\mathcal{D}_{M}^{0},\mathrm{loc}}%
^{2}(\mathfrak{Met}M)\rightarrow H_{\mathrm{loc}}^{2}(\mathfrak{Met}%
M)^{\mathcal{D}_{M}^{0}}$ is injective. Hence the condition P) in Proposition
\ref{PysAno} is equivalent to the condition

P$^{\prime}$) $\mathcal{U}\rightarrow\mathfrak{Met}M$ admits a $\mathcal{D}%
_{M}^{0}$-equivariant $\Xi$-local section if and only if there exists a
$\mathcal{D}_{M}^{0}$-invariant local $1$-form $\beta\in\Omega_{\mathrm{loc}%
}^{1}(\mathfrak{Met}M)^{\mathcal{D}_{M}^{0}}$ such that $\mathrm{curv}%
(\Xi)=d\beta$.

\subsection{Locality and Universality\label{SubSecLocUniv}}

By using the isomorphism (\ref{Iso}) the condition P$^{\prime}$) can be
expresed in terms of functional forms on $\mathbb{R}^{n}$. Hence the condition
for perturbative gravitational anomaly depends only on the dimension of $M$
and universality is a consequence of locality. Furthermore, the perturbative
anomaly in dimension $n$ can be identified with a cohomology class on
$H^{2}(\mathcal{F}^{\bullet}(J^{\infty}\mathcal{M}_{\mathbb{R}^{n}%
})^{\mathcal{D}_{\mathbb{R}^{n}}^{+}})$.

For determinant bundles $\mathrm{curv}(\Xi)$ is given by the Atiyah-Singer
theorem as a combiantion of characteristic classes, i.e., we have
$\mathrm{curv}(\Xi)=\sigma^{p_{D}}$ for a polynomial $p_{D}\in I^{1+n/2}%
(O(n))$. For example, for twisted Dirac operators associated to a
representation of the Spin group $p_{D}=[\widehat{A}\cdot\mathrm{Ch}]_{1+n/2}%
$, where $\widehat{A}$ is the Dirac genus and $\mathrm{Ch}$\ the Chern
character of the representation and we take the component of degree $1+n/2$.
By Proposition \ref{injective}\ the condition P$^{\prime}$)\ cannot be
satisfied if $p_{D}\neq0$. Hence it is imposible to cancel the perturbative
anomaly with a local counterterm if $p_{D}\neq0$.

If the condition P$^{\prime}$) is satisfied, then the perturbative anomaly
cancells. However, we can still have global anomalies because a $\mathcal{D}%
_{M}^{0}$-equivariant section could be not $\mathcal{D}_{M}^{+}$-equivariant.
We can define a new connection $\Xi^{\prime}=\Xi-\pi^{\ast}\beta$ which is
$\mathcal{D}_{M}^{0}$-invariant by Corollary \ref{Corollarylocal} and it is
$\mathcal{D}_{M}^{+}$-flat by Proposition \ref{LieAlgCoho}. Hence we can
assume that the connection $\Xi$ is $\mathcal{D}_{M}^{+}$-flat. It is shown in
\cite{AnomaliesG} that if a connection $\Xi$ is $\mathcal{D}_{M}^{+}$-flat
then the equivariant holonomy $\mathrm{hol}_{\phi}^{\Xi}(\gamma)$ does not
depend on $\gamma\in\mathcal{C}^{\phi}$ and defines a group homomorphism
$\kappa^{\Xi}\in\mathrm{Hom}(\mathcal{D}_{M}^{+}/\mathcal{D}_{M}%
^{0},\mathbb{R}/\mathbb{Z})$ by $\kappa_{\phi}^{\Xi}=\mathrm{hol}_{\phi}^{\Xi
}(\gamma)$ for any $\gamma\in\mathcal{C}^{\phi}$. We call $\kappa^{\Xi}$\ the
$\mathcal{D}_{M}^{+}$-flat holonomy. Furthermore, condition G) is equivalent
to the existence of a form $\beta\in\Omega_{\mathrm{loc}}^{1}(\mathfrak{Met}%
M)^{\mathcal{D}_{M}^{+}}$ such that $D\beta=0$ and $\kappa_{\phi}^{\Xi}%
=\int_{\gamma}\beta\operatorname{mod}\mathbb{Z}$ for any $\phi\in
\mathcal{D}_{M}^{+}$ and $\gamma\in\mathcal{C}^{\phi}$. Again the possible
forms $\beta$ satisfiying these conditions can be determined by using the
isomorphism (\ref{Iso}).

By Theorem \ref{H1} if $n\neq3\operatorname{mod}4$ then any $\beta\in
\Omega_{\mathrm{loc}}^{1}(\mathfrak{Met}M)^{\mathcal{D}_{M}^{+}}$ satisfiying
$d\beta=0$ is of the form $\beta=d\Lambda$, with $\Lambda\in\Omega
_{\mathrm{loc}}^{0}(\mathfrak{Met}M)^{\mathcal{D}_{M}^{+}}$, and hence
$\int_{\gamma}\beta=\Lambda(\gamma(1))-\Lambda(\gamma(0))=\Lambda(\phi
\cdot\gamma(0))-\Lambda(\gamma(0))=0$ for any $\phi\in\mathcal{D}_{M}^{+}$ and
$\gamma\in\mathcal{C}^{\phi}$. Hence if $n\neq3\operatorname{mod}4$ it is
impossible to cancell the anomally if $\kappa^{\Xi}\neq0$.

But if $n=3\operatorname{mod}4$ then it could be possible to cancel the
anomaly. By Theorem \ref{H1}\ in this case any $\beta\in\Omega_{\mathrm{loc}%
}^{1}(\mathfrak{Met}M)^{\mathcal{D}_{M}^{+}}$ satisfiying $d\beta=0$ is of the
form $\beta=\sigma^{p}+d\Lambda$, with $\Lambda\in\Omega_{\mathrm{loc}}%
^{0}(\mathfrak{Met}M)^{\mathcal{D}_{M}^{+}}$ and $q\in I^{1+n/2}(O(n))$. For
any $\phi\in\mathcal{D}_{M}^{+}$ and $\gamma\in\mathcal{C}^{\phi}$ by Theorem
\ref{MappingTorus} we have
\[
\int_{\gamma}\sigma^{p}=\int_{\gamma}d(CS^{p})=CS_{p,A_{0}}(\phi\cdot
\gamma(0))-CS^{p}(\gamma(0))=\delta_{\phi}^{p}=p(M_{\phi}).
\]
In this way we obtain our main result

\begin{theorem}
\label{PysAnoFlat} Let $\mathcal{U}\rightarrow\mathfrak{Met}M$ be a
$\mathcal{D}_{M}^{+}$-equivariant $U(1)$-bundle with a $\mathcal{D}_{M}^{+}%
$-flat connection $\Xi$. Then $\mathcal{U}$ admits a $\mathcal{D}_{M}^{+}%
$-equivariant $\Xi$-local section if and only if there exists $p\in
I^{(n+1)/2}(O(n))$ such that $\kappa_{\phi}^{\Xi}=p(M_{\phi}%
)\operatorname{mod}\mathbb{Z}$ for any $\phi\in\mathcal{D}_{M}^{+}$.
\end{theorem}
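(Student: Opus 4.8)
The plan is to deduce the theorem from the equivariant holonomy criterion of Proposition~\ref{PysAno}(G) together with the classification of closed local $\mathcal{D}_M^+$-invariant $1$-forms in Theorem~\ref{H1} and the mapping-torus identity of Theorem~\ref{MappingTorus}. Since a $\mathcal{D}_M^+$-flat connection is in particular a local connection, Proposition~\ref{PysAno}(G) applies: $\mathcal{U}$ admits a $\mathcal{D}_M^+$-equivariant $\Xi$-local section if and only if there is a local $\mathcal{D}_M^+$-invariant $1$-form $\beta\in\Omega_{\mathrm{loc}}^1(\mathfrak{Met}M)^{\mathcal{D}_M^+}$ with $\mathrm{hol}_\phi^\Xi(\gamma)=\int_\gamma\beta\operatorname{mod}\mathbb{Z}$ for every $\phi\in\mathcal{D}_M^+$ and every $\gamma\in\mathcal{C}^\phi$. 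Using that $\Xi$ is $\mathcal{D}_M^+$-flat, $\mathrm{hol}_\phi^\Xi(\gamma)$ is independent of $\gamma$ and equals $\kappa_\phi^\Xi$, and (as recalled just before the statement) the $1$-form $\beta$ may be taken to satisfy $D\beta=0$; so the whole theorem reduces to the existence of $\beta\in\Omega_{\mathrm{loc}}^1(\mathfrak{Met}M)^{\mathcal{D}_M^+}$ with $D\beta=0$ and $\kappa_\phi^\Xi=\int_\gamma\beta\operatorname{mod}\mathbb{Z}$ for all $\phi$ and all $\gamma\in\mathcal{C}^\phi$.

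For the ``only if'' direction I would start from such a $\beta$; note $D\beta=0$ forces $d\beta=0$. If $n\neq 3\operatorname{mod}4$, Theorem~\ref{H1}(a) gives $H_{\mathrm{loc}}^1(\mathfrak{Met}M)^{\mathcal{D}_M^+}=0$, so $\beta=d\Lambda$ with $\Lambda\in\Omega_{\mathrm{loc}}^0(\mathfrak{Met}M)^{\mathcal{D}_M^+}$, and then $\int_\gamma\beta=\Lambda(\phi\cdot\gamma(0))-\Lambda(\gamma(0))=0$ by $\mathcal{D}_M^+$-invariance of $\Lambda$, so $\kappa_\phi^\Xi=0=p(M_\phi)$ with $p=0$. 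If $n=3\operatorname{mod}4$, Theorem~\ref{H1}(b) gives $\beta=\sigma^p+d\Lambda$ with $p\in I^{(n+1)/2}(O(n))$ and $\Lambda\in\Omega_{\mathrm{loc}}^0(\mathfrak{Met}M)^{\mathcal{D}_M^+}$; the exact part again contributes nothing along $\gamma$, while $\sigma^p=dCS_{p,A_0}$ gives $\int_\gamma\sigma^p=CS_{p,A_0}(\phi\cdot\gamma(0))-CS_{p,A_0}(\gamma(0))=\delta_\phi^p=p(M_\phi)$ by Proposition~\ref{Independence} and Theorem~\ref{MappingTorus}. Either way $\kappa_\phi^\Xi=p(M_\phi)\operatorname{mod}\mathbb{Z}$.

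Conversely, given $p\in I^{(n+1)/2}(O(n))$ with $\kappa_\phi^\Xi=p(M_\phi)\operatorname{mod}\mathbb{Z}$ for all $\phi$, I would take $\beta=\sigma^p\in\Omega_{\mathrm{loc}}^1(\mathfrak{Met}M)^{\mathcal{D}_M^+}$. By Lemma~\ref{basic} this form is $\mathcal{D}_M^+$-basic, hence $D\sigma^p=0$; and by the computation above $\int_\gamma\sigma^p=p(M_\phi)=\kappa_\phi^\Xi\operatorname{mod}\mathbb{Z}$ for every $\gamma\in\mathcal{C}^\phi$. Thus the reduced criterion is met, and Proposition~\ref{PysAno}(G) yields the desired $\mathcal{D}_M^+$-equivariant $\Xi$-local section.

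The substantive ingredient is Theorem~\ref{H1}, i.e.\ that every closed local $\mathcal{D}_M^+$-invariant $1$-form equals some $\sigma^p$ up to the differential of an invariant local functional; this in turn rests on the Gel'fand--Fuks/Anderson description of invariant functional $1$-forms (Proposition~\ref{anderson}) and the universality Lemma~\ref{LemmaUniver}. Granting those, the present argument is essentially an assembly of earlier results. The only step demanding a little care is the reduction in the first paragraph, where $\mathcal{D}_M^+$-flatness of $\Xi$ is precisely what makes the equivariant holonomy path-independent (so $\kappa_\phi^\Xi$ is well defined) and permits arranging $D\beta=0$; everything else is the two elementary line-integral computations $\int_\gamma d\Lambda=0$ and $\int_\gamma\sigma^p=p(M_\phi)$.
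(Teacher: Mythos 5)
Your proposal is correct and follows essentially the same route as the paper: the paper's own proof is exactly the assembly, carried out in the paragraphs preceding the theorem, of Proposition~\ref{PysAno}(G), the path-independence of the equivariant holonomy for a $\mathcal{D}_{M}^{+}$-flat connection, the classification of closed local invariant $1$-forms in Theorem~\ref{H1}, and the computation $\int_{\gamma}\sigma^{p}=\delta_{\phi}^{p}=p(M_{\phi})$ from Theorem~\ref{MappingTorus}. Your explicit treatment of the converse via $\beta=\sigma^{p}$ and Lemma~\ref{basic} matches what the paper leaves implicit.
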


For determinant or Pffafian bundles associated to a family of operators
$D_{g}$ Witten formula has been interpreted as a computation of \ the holonomy
of the Bismut-Freed connection on the quotient determinant bundle
$\mathcal{U}/\mathcal{D}_{M}^{+}\rightarrow\mathfrak{Met}M/\mathcal{D}_{M}%
^{+}$ (e.g. see \cite{BF2}, \cite{DaiFreed}, \cite{freed}). Furthermore, it
can be also interpreted as the equivariant holonomy of the Bismut-Freed
connection on$\ \mathcal{U}\rightarrow\mathfrak{Met}M$. Hence, for
$\mathcal{D}_{M}^{+}$-flat connetions we have $\kappa_{\phi}^{\Xi}=\frac{1}%
{4}\eta_{D}(M_{\phi})$ where $D$ is an operator on the mapping torus. In those
cases the necessary and sufficient condition for gravitational anomaly
cancellation is the existence of $p\in I^{(n+1)/2}(O(n))$ such that
\begin{equation}
\frac{1}{4}\eta_{D}(M_{\phi})=p(M_{\phi})\operatorname{mod}\mathbb{Z}\text{
for any }\phi\in\mathcal{D}_{M}^{+}. \label{AnnomFinal}%
\end{equation}

Hence we show that the unique way to cancell the gravitational anomaly is by
using a Chern-Simons counterterm, as anounced in Section \ref{SectionCS}. If
condition (\ref{AnnomFinal}) is satisfied, then the gravitational anomaly can
be cancelled by using the Chern-Simons action associated to $p$.

We have shown that the posible counterterms necessary to cancell the anomaly
are given by a local $1$-form $\beta\in\Omega_{\mathrm{loc}}^{1}%
(\mathfrak{Met}M)^{\mathcal{D}_{M}^{+}}\simeq\mathcal{F}^{1}(J^{\infty
}\mathcal{M}_{\mathbb{R}^{n}})^{\mathcal{D}_{\mathbb{R}^{n}}^{+}}$, and hence
they are universal. However, the mapping class group $\Gamma_{M}%
=\mathcal{D}_{M}^{+}/\mathcal{D}_{M}^{0}$ and the $\mathcal{D}_{M}^{+}$-flat
holonomy $\kappa^{\Xi}$ are nonlocal objects and can be different for
different manifolds of the same dimension. Hence the condition
(\ref{AnnomFinal}) could be satisfied only for certain manifolds of dimension
$n$ but not for all of them.$\ $We conclude that for global anomalies
universality is not a consequence of locality and should be imposed. One way
to solve the condition (\ref{AnnomFinal}) in a universal way is if%
\begin{equation}
\frac{1}{4}\eta_{D}(N)=p(N)\operatorname{mod}\mathbb{Z}\text{ for any oriented
manifold }N\text{\ of dimension }n+1. \label{AnnomFinalU}%
\end{equation}
We recall that in general $\eta_{D}(N)$ cannot be obtained as the integral of
a local form on $N$. However, in certain cases (e.g. see \cite{Stolz})
$\eta_{D}(N)$ can be expressed as a characteristic number of $N$, and the
condition (\ref{AnnomFinal}) is satisfied.

We recall that Witten has proposed in \cite{WittenFermionic} a stronger
condition for anomaly cancellation. If the condition (\ref{AnnomFinal}) is
satisfied for any $M$ the partition function for any $M$ is determined up to a
phase, but the phases for different $M^{\prime}$s should also be fixed. To
solve this problem Witten proposes as a gereralization of the condition for
anomaly cancellation the condition
\begin{equation}
\frac{1}{4}\eta_{D}(N)=0\operatorname{mod}\mathbb{Z}\text{ for any oriented
manifold }N\text{ of dimension }n+1. \label{AnnomFinalW}%
\end{equation}

In dimension $n\neq3\operatorname{mod}4$ the conditions (\ref{AnnomFinalU})
and (\ref{AnnomFinalW}) are the same. However, if $n=3\operatorname{mod}4$ the
condition (\ref{AnnomFinalU}) can be weaker than (\ref{AnnomFinalW}). For
example, for Majorana fermions in an oriented manifold of dimension 4 we have
(see \cite{WittenFermionic}) $\frac{1}{4}\eta_{D}(N)=\frac{1}{32}%
\sigma(N)\operatorname{mod}\mathbb{Z}$, where $\sigma(N)$ is the signature of
$N$. For $N$ a $K3$ surface we have $\frac{1}{4}\eta_{D}(N)=\frac{1}{2}%
\neq0\operatorname{mod}\mathbb{Z}$ and the\ condition (\ref{AnnomFinalW}) is
not satisfied. However, it follows from the Hirzebruch signature Theorem that
$\sigma(N)=\frac{1}{3}p_{1}(N)$. Hence we have $\frac{1}{4}\eta_{D}%
(N)=\frac{1}{96}p_{1}(N)$ for any oriented $N$, and the anomaly can be
cancelled in the sense of condition (\ref{AnnomFinalW}).

We recall the argument in \cite{WittenFermionic} that leads to the condition
(\ref{AnnomFinalW}). If $X$ is a manifold with boundary such that $\partial
X=M$, and the metric and all the structures can be extended to $X$, Witten
proposes to define the partiton function by $\mathcal{Z}_{D}(g)=\exp(2\pi
i\frac{1}{4}\eta_{D}(X))$. That $\mathcal{Z}_{D}(g)$ is independent of the
manifold $X$\ chosen follows from the Dai-Freed theorem. This argument can be
generalized if condition (\ref{AnnomFinalU})\ is satisfied by defining
$\mathcal{Z}_{D}(g)=\exp(2\pi i(\frac{1}{4}\eta_{D}(X)-\int_{X}p(\omega
^{\overline{g}})))$, where $\overline{g}$ is the extension of $g$ to $X$.

\subsection{Orientation reversing diffeomorphisms and unorientable manifolds}

We have shown in Section \ref{CSnoOrient} that the condition for anomaly
cancellation can be extended to orientation reversing diffeomorphisms. The
condition is%

\begin{equation}
\tfrac{1}{4}\eta_{D}(M_{\phi})=\frac{1}{2}p(M_{\phi^{2}})\operatorname{mod}%
\mathbb{Z}\text{ for any }\phi\in\mathcal{D}_{M}. \label{Anomaly2}%
\end{equation}

We can find a universal version of this condition in the following way. The
mapping torus $M_{\phi}$ can also be obtained as que quotient $(M\times
\mathbb{R})/\mathbb{Z}$ where $\mathbb{Z}$ acts on $M\times\mathbb{R}$ by
setting $\ n\cdot(x,t)=(\phi^{n}(x),t+n)$. Then the projection $M_{\phi^{2}%
}\rightarrow M_{\phi}$ $[(x,t)]_{\phi^{2}}\mapsto\lbrack(\phi(x),2t)]_{\phi}$
is a double cover. If $\phi$ reverses the orientation $M_{\phi}$ is
unorientable, but $M_{\phi^{2}}$ is orientable because $\phi^{2}\in
\mathcal{D}_{M}^{+}$. For any unorientable manifold we have a doble cover
$\widetilde{N}\rightarrow N$ with $\widetilde{N}$ orientable and we have
$\widetilde{M_{\phi}}=M_{\phi^{2}}$. Hence, one way to generalize the
condition (\ref{Anomaly2}) is the following condition
\begin{equation}
\frac{1}{4}\eta_{D}(N)=\frac{1}{2}p(\widetilde{N})\operatorname{mod}%
\mathbb{Z}\ \text{for any manifold }N\text{ of dimension }n+1\text{.}
\label{AnnomFinalU2}%
\end{equation}

Again, as the rigth side is independent of the orientation on $\widetilde{N}$,
this condition can be satisfied only if $\frac{1}{4}\eta_{D}(N)=0,\frac{1}%
{2}\operatorname{mod}\mathbb{Z}$.

The analogous condition proposed in \cite{WittenFermionic} is
\begin{equation}
\frac{1}{4}\eta_{D}(N)=0\operatorname{mod}\mathbb{Z}\ \text{for any manifold
}N\text{ of dimension }n+1\text{.} \label{AnnomFinalW2}%
\end{equation}

Let us consider again the case of Majorana Fermions in dimension 3 studied in
\cite{WittenFermionic}. We have shown before that if we consider only oriented
manifolds the anomaly can be cancelled. But if we admit unorientable manifolds
the situation is different. For $N=\mathbb{RP}^{4}$ we have $\frac{1}{4}%
\eta_{D}(\mathbb{RP}^{4})=\frac{1}{16}\operatorname{mod}\mathbb{Z}$ and hence
the anomaly cannot be cancelled in the sense of the condition
(\ref{AnnomFinalU2}).

One basic question in solid state physics is the problem of how many Majorana
fermions shoud we have in order to have an anomaly free theory. This problem
appears in the analysis of the boundary of a topological superconductor (see
\cite{WittenFermionic} for details). If we consider $\nu$ Majorana fermions,
then we have $\kappa_{\phi}^{\Xi}=\frac{\nu}{4}\eta_{D}(M_{\phi})$. The
condition $\kappa_{\phi}^{\Xi}=0$ for any $\phi\in\mathcal{D}_{M}$ implies
that $\nu$ should be a multiple of $8$. However the condition
(\ref{AnnomFinalW2}) implies that $\nu$ should be a multiple of $16$ (because
for $N=\mathbb{RP}^{4}$ we have $\frac{1}{4}\eta_{D}(\mathbb{RP}^{4})=\frac
{1}{16}\operatorname{mod}\mathbb{Z)}$, a result that coincides with other
analysis in condensed mather physics.

We show that our condition (\ref{AnnomFinalU2}) also implies that $\nu$ should
be a multiple of $16$. For $N=\mathbb{RP}^{4}$ we have $\sigma(\widetilde
{N})=\sigma(S^{4})=0$. Furthermore, as $I^{2}(O(4))$ is one dimensional, this
implies $p(S^{4})=0$ for any $p\in I^{2}(O(4))$. Hence condition
(\ref{AnnomFinalU2}) is satisfied for $N=\mathbb{RP}^{4}$ if $\frac{\nu}%
{16}=0\operatorname{mod}\mathbb{Z}$ and hence $\nu$ should be a multiple of
$16$. Furthermore it is shown in \cite{WittenFermionic} that $\eta
_{D}(N)=\frac{n}{4}$ for any manifold, and hence if $\nu=16k$ we have
$\frac{\nu}{4}\eta_{D}(N)=4k\frac{n}{4}=0\operatorname{mod}\mathbb{Z}$, and
the anomaly cancells with $p=0$. Hence condition (\ref{AnnomFinalU2}) is
satisfied if and only if $\nu$ is a multiple of $16$.

\end{document}